\newtheorem{theorem}{Theorem}[section]
\newtheorem{definition}[theorem]{Definition}
\theoremstyle{definition}
\newcommand{\be}{\begin{equation}}
\newcommand{\ee}{\end{equation}}
\newcommand{\Rt}{\mathbb{R}^3}
\newcommand{\RA}{\mathcal{R}_A}
\newcommand{\RC}{\mathcal{R}_C}
\newcommand{\dv}{dv}
\newcommand{\dvf}{dv_0}
\newcommand{\pd}{\partial }
\title{Size, angular momentum and mass for objects}
\author{Pablo Anglada, M.E. Gabach-Clement\footnote{gabach@famaf.unc.edu.ar}, Omar E. Ortiz\\
  Facultad de Matem\'atica, Astronom\'{i}a y F\'{i}sica, \\
     Universidad Nacional de C\'ordoba, \\
Instituto de F\'{i}sica Enrique Gaviola, IFEG, CONICET,\\
  Ciudad Universitaria (5000) C\'ordoba, Argentina.}
\begin{document}
\maketitle
\begin{abstract}
We obtain a geometrical inequality involving the  ADM mass, the angular momentum and the size of an  ordinary, axially symmetric object. 
We use the monotonicity of the Geroch 
quasi-local energy on 2-surfaces along the inverse mean curvature flow. We also compute numerical examples to test the robustness of our hypotheses and results
\end{abstract}

\section{Introduction}

During the last decade, geometrical inequalities for black holes have received
much attention and different relations involving the total mass, the angular
momentum, the horizon area, the electromagnetic charge, the cosmological
constant and certain shape parameters have been found
\cite{Dain06c,Hennig:2008zy,Acena:2010ws,Dain:2011pi,Clement:2012vb,Schoen:2012nh,Clement:2015fqa}.
The question of whether the same kind of relations hold for ordinary objects
(\textit{i.e.} not black holes) is not trivial. Black holes are very special
solutions of Einstein equations that can be described by few parameters,
at least in the stationary limit. The geometrical inequalities mentioned above
show that this no-hair property of the stationary state sets restrictions on the
values that physical quantities can have in the general, dynamical black hole
state.

On the other hand, ordinary objects like neutron stars are not simple, and 
hence finding such simple relations  is not \textit{a priori} expected. 
One of the first works in this direction is due to Schoen and Yau \cite{SchoenYau1983} (see the discussion and references in \cite{Reiris:2014tva}).
They found a lower bound to the Ricci scalar (and hence, to the matter density) in asymptotically flat initial data only in terms of a certain 
radius 
characterizing the object. This in turn gave rise to a black hole formation criteria due to concentration of matter. There were other results similar 
in nature to this one, namely, quasi-local inequalities between size and matter/currents density for objects, with alternative measures of size 
\cite{Murchadha86b} \cite{Galloway:2008gc}.
More recently, the Schoen-Yau bound was used by Dain 
\cite{Dain:2013gma} and Khuri \cite{Khuri:2015zla} under different conditions in axial symmetry, to find a quasi-local relation between angular momentum and size, of the form
$J\lesssim \mathcal R^2$. Furthermore Khuri 
\cite{Khuri:2015xpa} also used it to prove an inequality for charged objects 
using a similar measure of size. 
Independently, Reiris \cite{Reiris:2014tva} derived quasi-local inequalities 
relating the angular momentum and charge to precise measures of size and 
shape
of ordinary objects. Using a similar approach, Khuri \cite{Khuri:2016ulv} also
established inequalities relating size, local mass, angular momentum, and charge, that give rise to black hole existence criteria.

The   
appropriate measure of the size of an object is not easy to determine, 
nor is  finding its relation with relevant physical quantities, hence these 
inequalities are not expected to be sharp.
Moreover, it is expected that these inequalities become saturated  for very 
special cases, for 
example the inequality between charge and size, $2\mathcal R > Q$, is sharp  in 
the spherically symmetric case with $\mathcal R$ the areal radius of the object 
\cite{Anglada:2015tan}, and the equality is achieved for a sequence of objects 
whose charge, mass, and radius tend to zero.

Our interest in this article is to relate the rotation of an ordinary object (i.e. its angular momentum) with the total energy (the ADM mass, see \cite{Arnowitt62}) in axially symmetric systems. This 
approach is inspired by the slow rotation treatment of neutron stars. As discussed in \cite{haensel2007neutron}, rotating neutron stars are axially symmetric and for slow rotation, the rotational perturbations of the stellar structure are quadratic in the 
angular velocity and therefore, quadratic in the angular momentum. In the Newtonian limit, one may write the total energy of the star as the sum of 
two terms, the first including the gravitational and internal energies, 
denoted as $E_0$, and the second, the rotational energy
\be\label{Enewton}
E\approx E_0+\frac{J^2}{2I}
\ee
where $J$ is the angular momentum and $I$ the stellar moment of inertia. In 1967 Hartle \cite{Hartle:1967he} devised a perturbative method to compute the first order correction to the 
neutron star's energy in the context of general relativity, and found an 
expression similar to \eqref{Enewton}, where the term $E_0$ represents the 
total energy for the non-rotating star, and the quadratic term in the angular 
momentum is the rotational contribution to first order. An extensive study of 
this equation and different relations between the kinetic and binding energies
has been performed since the late 1960's (see \cite{Stergioulas:2003yp} for further details and references). 
It is not our aim to address these problems, but to seek geometrical relations between certain physical parameters for a rotating ordinary object, like the neutron star mentioned above.

Within Newtonian theory, there are no restrictions to the values that the quantities in \eqref{Enewton} can attain (we are not considering mechanical processes related to
the particular equations of state of matter). Nevertheless, in general relativity, the Hoop conjecture \cite{Senovilla:2007dw} sets bounds on the (quasi-local) mass that an object can have. 
This conjecture roughly says that if one is able to pass a hoop of 
 radius $\mathcal R$ in every direction around a region $\Omega$ with (quasi-local) mass $m_{\Omega}$, then it will collapse to form a black hole if 
\be\label{hoop}
m_{\Omega}>\frac{\mathcal R}{2}.
\ee
Otherwise, if \eqref{hoop} is not satisfied, one expects an ordinary object.

To incorporate this expected constraint into \eqref{Enewton}, we start with the Newtonian definition of moment of inertia of $\Omega$ with respect to the symmetry axis
\be
I:=\int_\Omega \mu\rho^2dV
\ee
where $\mu$ is the matter density and $\rho$ is the (euclidean) distance from the axis. Moreover, we can roughly write
\be\label{inertiaapprox}
I\approx m_\Omega\mathcal R_C^2
\ee
where $m_\Omega$ is some measure of the (quasi local) mass contained in the object and $\mathcal R_C=\max \rho$ is the circumferential radius. For instance, for a spheroid rotating around the axis of (semi-)length $\mathcal R_C$, the moment of inertia is
$2/5m_{\Omega}\mathcal R_C^2$. Then, putting together \eqref{hoop} and \eqref{inertiaapprox} we may estimate the following bound to the 
total energy of a \textit{non-collapsing} object
\be\label{Enewton2}
E\gtrsim E_0+\frac{J^2}{\mathcal R\mathcal R_C^2}.
\ee
Although naive and informal, this expression gives a lower bound on the total energy for a rotating system in terms of the angular momentum and two measures of size, that we call $\mathcal R$
and $\mathcal R_C$. It is interesting to note that these two quantities come from different contexts. The \textit{distance} to the rotation axis at which matter is located, 
represented in our argument by the circumferential radius $\mathcal R_C$, seems 
to be (at least from Newtonian experience) the relevant quantity to describe the 
kinetic 
rotational energy. In other words, for rotation it is important to account for the spread out of matter with respect to the symmetry/rotation axis. On the other hand, the measure $\mathcal R$ coming from the Hoop conjecture should describe all directions in which matter is spread out. This quantity
cares about how localized in every direction matter is.

In this article we present a relation similar to \eqref{Enewton2} for an ordinary, isolated, rotating and axially symmetric object in  general relativity.

The main tools we use to accomplish this are the inverse mean curvature flow
(IMCF) and the Geroch energy, which  have proven to be useful in obtaining
geometrical inequalities in general relativity like the Riemannian Penrose
inequality \cite{Huisken01}.  More related to our system, Dain used the IMCF and
the Hawking energy to obtain an inequality between size, ADM mass and electric
charge for ordinary objects \cite{Dain:2015ira}. Our aim in this article is to
explore what the IMCF can say about rotating ordinary objects. The main
difficulty being the explicit inclusion of the angular momentum into the
geometrical relations.

The article is organized as follows: In section \ref{sec:IMCF} we review the
inverse mean curvature flow and some important properties we will use in
deriving our result. Also we introduce the Geroch energy and discuss the
monotonicity properties. In section \ref{sec:Main} we present our hypotheses,
the main theorem and discussions.  In section \ref{sec:num} we show numerical
solutions of the flow equations where we test both some of  the hypotheses of
our main result, and the geometrical inequality we have found for objects.

\section{IMCF and Geroch energy}\label{sec:IMCF}                                                                                                   
In this section we review the basic properties of the inverse mean curvature flow (IMCF) and the Geroch energy \cite{Huiskenevol}, \cite{Szabados04}.                        

Consider a smooth Riemannian 3-manifold  $M$ with metric $\bar g_{ij}$, connection $\bar \nabla_i$ and Ricci curvature $\bar R_{ij}$. A  solution of the IMCF is a 
smooth family of hypersurfaces  $S_t:=x(S,t)$ with $x:S\times[0,\tau]\to M$ satisfying the evolution equation
\be \label{eqIMCF}
\frac{\partial x}{\partial t}=\frac{\nu}{H}
\ee
where $t\in[0,\tau]$, $H>0$ is the mean curvature of the 2-surface $S_t$ at $x$ and $\nu$ is the outward unit normal to $S_t$.
%I am using the sign conventions of \cite{Huiskenevol}

Let $g_{ij}$ be the induced metric on $S_t$, $\nabla_i$ the covariant 
derivative, $h_{ij}$ the second fundamental form and $ds$ the area element of 
$S_t$. Then one can derive the 
evolution equations (see \cite{Huiskenevol} more details)
\be\label{evolg}
\frac{\partial}{\partial t}g_{ij}=\frac{2}{H}h_{ij}
\ee
\be\label{evolarea}
\frac{\partial}{\partial t}(ds)= ds
\ee
\be\label{evolH}
\frac{\partial}{\partial t}H=-\Delta(H^{-1})-H^{-1}(|h|^2+\bar R_{ij}\nu^i\nu^j).
\ee

It is important to note that when considering the IMCF in axially symmetric initial data, 
the IMCF equation \eqref{eqIMCF} preserves axial symmetry. That is, if one starts the flow out of a point on the symmetry axis (or out of an axially 
symmetric initial 
surface) there is no mechanism that could make the normal to each subsequent surface to have a component along the axial Killing vector. Due to this 
observation, from now on, when we discuss the IMCF flow, we always consider it consisting of axially symmetric surfaces $S_t$.

On each surface $S_t$ we introduce the Geroch energy
\be \label{Gmass}
E_G(S_t):=\frac{A_t^{1/2}}{(16\pi)^{3/2}}\left(16\pi-\int_{S_t}H^2ds\right)
\ee
where $A_t$ is the area of $S_t$. This functional has some interesting properties that will be used later in the proof of our result. Namely, for a complete, maximal, asymptotically flat initial
data, with non-negative scalar curvature, and surfaces $S_t$ that are topological spheres converging to round spheres at infinity, $E_G$ satisfies 
\be\label{propEG}
E_G(S_t)\geq0,\qquad \frac{dE_G}{dt}(S_t)\geq0,\qquad  \lim_{t \to \infty} E_G(S_t) = m_{ADM}.
\ee
We refer the reader to \cite{Huisken01} for details, proofs and further references. However, since it will be relevant in proving our main theorem, we will sketch the proof of the 
monotonicity property. 

We start with the time derivative of $E_G(S_t)$
\begin{multline}
\frac{d}{dt}E_G=\frac{A_t^{1/2}}{(16\pi)^{3/2}}\left[8\pi- 
\frac{1}{2}\int_{S_t}H^2 ds + \right.\\
\left. \int_{S_t}\left(-H^2+2H\Delta(H^{-1})+2|h|^2+2\bar 
R_{ij}\nu^i\nu^j\right)ds\right]
\end{multline}
where we have used \eqref{evolarea} and \eqref{evolH}. Then we use the Gauss equation
\be
2\bar R_{ij}\nu^i\nu^j=\bar R+H^2-|h|^2-2\kappa
\ee
where $\kappa$ is the Gauss curvature, and obtain
\be
\frac{d}{dt}E_G=\frac{A_t^{1/2}}{(16\pi)^{3/2}}\left[8\pi +\int_{S_t} 
\left(2H\Delta(H^{-1})+|h|^2 +\bar R -2\kappa - \frac{H^2}{2}\right)ds\right].
\ee
 Rewriting the $|h|^2$ term in terms of the principal curvatures $\lambda_1$ and 
$\lambda_2$ and the mean curvature we find
\be
\frac{d}{dt}E_G=\frac{A_t^{1/2}}{(16\pi)^{3/2}}\left[8\pi +\int_{S_t} 
\left(2H\Delta(H^{-1})+\frac{1}{2}(\lambda_1-\lambda_2)^2 +\bar 
R-2\kappa\right)ds\right].
\ee
Next, we use the Gauss-Bonnet theorem and integrate by parts the Laplace operator
\be
\frac{d}{dt}E_G=\frac{A_t^{1/2}}{(16\pi)^{3/2}}\left[8\pi-4\pi\chi(S_t)+\int_{S_t
}\left(2\frac{|\nabla H|^2}{H^2}+\frac{1}{2}(\lambda_1-\lambda_2)^2+\bar 
R\right)ds\right]
\ee
where $\chi(S_t)$ is the surface's Euler characteristic.
If the surfaces $S_t$ are topological spheres  we have
\be\label{eqvolmm}
\frac{d}{dt}E_G=\frac{A_t^{1/2}}{(16\pi)^{3/2}}\int_{S_t}\left(2\frac{|\nabla 
H|^2}{H^2}+\frac{1}{2}(\lambda_1-\lambda_2)^2+\bar R\right)ds
\ee
from where we see that if the 3-manifold has nonnegative scalar curvature, then the Geroch energy is non-decreasing, that is $dE_G/dt\geq0$.

\section{Main result}\label{sec:Main}

Following \cite{Malec:2002ki}, we consider a complete initial data $(M,\bar g, K; \mu, j)$,
where $K$ is the extrinsic curvature of the 3-manifold $M$, and $\mu$, $j$ are the matter density and the mater current density respectively. 
We take this initial data to be maximal, asymptotically flat and axially symmetric, and we assume it satisfies the Dominant Energy Condition 
(DEC), $\mu\geq|j|$. Since we are concerned with ordinary objects, as opposed to black holes, we also require the initial data to have no 
minimal surfaces. %Furthermore, in order to have a good definition of the rotating object we are interested in, we suppose that the matter 
%current density $j$ is contained in a compact region.
The definition of a rotating object we will use is the following:

\begin{definition}{Object:}
%We define an ordinary rotating object in the initial data as a 
Open set $\Omega$ in $M$ which is axially symmetric, compact, connected and such that the matter current density $j$ has compact support in $\Omega$.
%For simplicity, we take $\Omega$ as connected.
\end{definition}

Assume that on $(M,\bar g)$ there exists a smooth inverse mean curvature flow of compact surfaces $S_t$, having spherical topology and going to 
 round spheres at infinity.  
Our aim is to relate 
the region $\Omega$ with the surfaces given by the IMCF to obtain a geometrical inequality involving physical parameters of $\Omega$.
In order to do this, we will take into account the fact that the asymptotic behavior of the surfaces $S_t$
implies that after some time $T$, ${S_t }$ will be convex. And also, the fact that maximality of  the initial data, together with the DEC imply, via the constraint
equations, that  
$\bar R \geq 0$, and hence the Geroch energy is non-decreasing. This will be crucial in what follows. It is also important to note that the assumption about
the smoothness of the flow could be relaxed. That is, some parts of our derivation do not require smoothness, and can be done using the weak 
level set version of the flow defined by Huisken and Ilmanen \cite{Huisken01}. Nevertheless, for simplicity of presentation we consider only the smooth case in 
this article.

Besides the ADM mass, $m_{ADM}$, the physical and geometrical quantities we are interested in are  the Komar angular momentum $J(S)$
and the areal and circumferential radii of a surface $S$ in $M$:

\be\label{angmom}
 J(S)=\frac{1}{8\pi}\int_{S} K_{ij} \eta^i \nu^jds,
\ee
\be\label{size}
\RA(S):=\sqrt{\frac{A}{4\pi}},\qquad \RC(S):=\frac{\mathcal C}{2\pi}
\ee
where $\eta^i$ is the Killing vector field associated to the axial symmetry,
$A$ is the area of $S$ and $\mathcal C$ is the length of the greatest axisymmetric circle of $S$. 

We find the following result.

\begin{theorem}
Let $(M, \bar g,K; \mu, j)$
be a maximal, asymptotically flat, axially symmetric initial data, that contains an object $\Omega$; the data satisfies the 
dominant energy condition and has no minimal surfaces. 
Assume there exists a smooth IMCF of surfaces $S_t$ on $M$ starting from a point on the symmetry axis inside $\Omega$ and such that $S_t$ is convex 
for $t\geq T\in \mathbb R$ and $S_T$ encloses the object. Then
\be\label{mainineq}
m_{ADM}\geq m_{T} +\frac{1}{5}\frac{J^2}{\RA \RC^2}
\ee
where $J$, $\RA$ and $\RC$ are the angular momentum, areal radius and circumferential radius of $S_T$ respectively, and 
\be\label{mT}
m_{T}:=\frac{1}{16\pi}\int_0^{\RA}d\xi\int_{S_\xi}\bar R ds 
\ee
and $\xi$ stands for the  areal radius coordinate.
\end{theorem}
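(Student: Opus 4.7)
The strategy mirrors Dain's approach \cite{Dain:2015ira} for the size--charge inequality, replacing the electromagnetic contribution by a rotational one. The key observation is that, since the flow starts from a point on the axis, $E_G(S_0)=0$, and by the asymptotic hypothesis $E_G(S_t)\to m_{ADM}$; therefore, by the monotonicity of $E_G$,
\[
m_{ADM}\;=\;\int_0^T\!\frac{dE_G}{dt}\,dt\;+\;\int_T^\infty\!\frac{dE_G}{dt}\,dt\;=:\;I_1+I_2.
\]
Maximality together with the DEC imply $\bar R\ge 0$ via the Hamiltonian constraint, so every term in the integrand of \eqref{eqvolmm} is non-negative.

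For $I_1$, I retain only the $\bar R$ contribution in \eqref{eqvolmm} and change variables from $t$ to the areal radius $\xi=\sqrt{A_t/(4\pi)}$ (using $dt=2\,d\xi/\xi$ and $A_t^{1/2}=2\sqrt{\pi}\,\xi$); the numerical prefactors collapse to $1/(16\pi)$ and the right-hand side becomes exactly the definition \eqref{mT}, giving $I_1\ge m_T$.

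For $I_2$, first note that the support of $j$ lies strictly inside $S_T$, so the momentum constraint in the maximal gauge $\nabla^jK_{ij}=8\pi j_i$ together with the Killing property of $\eta$ implies, via Stokes' theorem, $J(S_t)=J$ for all $t\ge T$. Because $\eta$ is tangent to the axisymmetric $S_t$ and orthogonal to $\nu$, Cauchy--Schwarz with the weight $|\eta|$ yields
\[
(8\pi J)^2\;\le\;\int_{S_t}\frac{(K_{ij}\eta^i\nu^j)^2}{|\eta|^2}\,ds\cdot\int_{S_t}|\eta|^2\,ds .
\]
The pointwise bound $(K_{ij}\eta^i\nu^j)^2\le |K|^2|\eta|^2$ (Cauchy--Schwarz on the symmetric tensor $K$ contracted with the two orthogonal directions) and $|K|^2\le\bar R$ (Hamiltonian constraint, $\mu\ge 0$) bound the first factor by $\int_{S_t}\bar R\,ds$; the second is bounded by $\RC_t^2 A_t=4\pi\RC_t^2\RA_t^2$, since $|\eta|\le\RC_t$ on $S_t$. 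Substituting back into \eqref{eqvolmm} and keeping only the $\bar R$ term produces the pointwise-in-$t$ estimate
\[
\frac{dE_G}{dt}\;\ge\;\frac{J^2}{2\,\RA_t\,\RC_t^{\,2}}.
\]

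The main obstacle is integrating this bound between $T$ and $\infty$ while controlling $\RC_t$. Here convexity of $S_t$ for $t\ge T$ is essential. At the point of $S_t$ where $|\eta|$ attains its maximum, the parallel principal curvature of the axisymmetric surface of revolution equals $1/\RC_t$, so convexity forces $H\ge 1/\RC_t$ there. Combining this with the gradient estimate $|\nabla|\eta||\le 1$, which holds in the standard axisymmetric gauge for asymptotically flat data, the IMCF equation $\partial x/\partial t=\nu/H$ gives $d\RC_t/dt\le \RC_t$, hence $\RC_t\le \RC\,e^{t-T}$. Together with the exact relation $\RA_t=\RA\,e^{(t-T)/2}$ (from $dA_t/dt=A_t$), this gives $\RA_t\RC_t^{\,2}\le \RA\,\RC^2\,e^{5(t-T)/2}$, and therefore
\[
I_2\;\ge\;\frac{J^2}{2}\int_T^\infty\!\frac{dt}{\RA_t\RC_t^{\,2}}\;\ge\;\frac{J^2}{2\,\RA\,\RC^2}\int_T^\infty\! e^{-5(t-T)/2}\,dt\;=\;\frac{J^2}{5\,\RA\,\RC^2}.
\]
Adding $I_1\ge m_T$ and $I_2\ge \tfrac{1}{5}J^2/(\RA\RC^2)$ gives \eqref{mainineq}. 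The delicate ingredient is the growth bound on $\RC_t$: convexity is exactly what yields the lower bound on $H$ at the extremum of $|\eta|$, while the bound on $|\nabla|\eta||$ (to be verified in the chosen axisymmetric gauge) controls how fast $\RC_t$ can expand under the flow; these two ingredients together are what allow the outer integral to be closed against the value of $\RC$ at $S_T$.
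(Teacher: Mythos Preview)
Your overall architecture is correct and matches the paper: split the $t$-integral at $T$, get $m_T$ from the inner piece, and on $[T,\infty)$ extract $J^2$ via Cauchy--Schwarz and then control the growth of the Killing norm under the flow. Your Cauchy--Schwarz is arranged slightly differently from the paper's (you weight by $|\eta|$ and bound $\int_{S_t}|\eta|^2\le\RC_t^2 A_t$, whereas the paper weights by $\sqrt{\eta}$ and works with $\int_{S_t}\eta$), but the two are equivalent and yield the same pointwise estimate $dE_G/dt\ge J^2/(2\RA_t\RC_t^2)$; your numerics check out.

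The genuine gap is in your justification of $\RC_t\le\RC\,e^{t-T}$. Neither of the two ingredients you invoke holds in a general curved ambient metric. The identity ``$\lambda_\varphi=1/\RC_t$ at the maximum of $|\eta|$'' is a flat-space fact; in general one has $\lambda_\varphi=\nu(\sqrt{\eta})/\sqrt{\eta}$, which at the maximum equals $|\bar\nabla\sqrt{\eta}|/\sqrt{\eta}$, not $1/\sqrt{\eta}$. Likewise, the gradient bound $|\bar\nabla|\eta||\le 1$ is again the flat statement $|\nabla\rho|=1$ and is not available for a generic asymptotically flat axisymmetric $\bar g$ (and you flag this yourself). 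So as written the growth estimate is unproven.

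The fix is simpler than what you attempted and is exactly how the paper proceeds: use the IMCF evolution of the induced metric, $\partial_t g_{ij}=(2/H)h_{ij}$, in the $\varphi\varphi$ slot. Since $\eta=g_{\varphi\varphi}$ and $h_{\varphi\varphi}=\eta\,\lambda_\varphi$, this gives
\[
\partial_t\eta \;=\; \frac{2\lambda_\varphi}{H}\,\eta,
\]
and convexity for $t\ge T$ (both principal curvatures positive, hence $\lambda_\varphi\le H$) yields $\partial_t\eta\le 2\eta$ pointwise along flow lines. Taking the supremum gives $\RC_t^2\le\RC^2 e^{2(t-T)}$, i.e.\ exactly your $\RC_t\le\RC\,e^{t-T}$, with no ambient gradient assumption needed. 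With this in place your integral $\int_T^\infty e^{-5(t-T)/2}\,dt=2/5$ closes the argument and reproduces \eqref{mainineq}.
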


\begin{proof}
The scheme of the proof is to start with the time derivative of the Geroch energy \eqref{eqvolmm}, bound away the $|\nabla H|^2$ and 
$(\lambda_1-\lambda_2)^2$ terms and use the constraint equations to write $\bar R$ in terms of the angular momentum. Then integrate in the flow parameter $t$ to infinity.

With this in mind, we have the bound (for simplicity we omit the area element 
$ds$ when possible)
\be\label{evol2}
\frac{d}{dt}E_G\geq\frac{A_t^{1/2}}{(16\pi)^{3/2}}\int_{S_t} \left[16 \pi \mu +  K_{ij}  K^{ij} \right]
\ee
where we have used the constraint $\bar R=16 \pi \mu +  K_{ij}  K^{ij}-(\mbox{tr}K)^2$ and maximality (i.e. tr $K=0$). In order to include the angular momentum into the inequality, we 
use  the Cauchy-Schwarz inequality in the definition of 
$J_t:=J(S_t)$  
\be
\begin{split}
% \begin{align*}
  J_t^2 &= \left( \frac{1}{8\pi} \left|  \int_{S_t} K_{ij} \eta^i \nu^j \right| \right)^2 
  \leq \frac{1}{(8\pi)^2} \left( \int_{S_t} \mid  K_{ij} \eta^i \nu^j \mid \right)^2 \\
             &\leq \frac{1}{(8\pi)^2} \left( \int_{S_t} \mid  K_{ij} \mid  \sqrt{\eta} \right)^2\leq \frac{1}{(8\pi)^2} 
             \int_{S_t} \mid  K_{ij} \mid ^2  \int_{S_t} \eta  
% \end{align*}
\end{split}
\ee
where $\eta:=\eta_i\eta^i$ is the square norm of $\eta^i$ and in the fourth step we have used the H\"older inequality with $p=q=2$. Hence, we have a 
bound for the angular momentum of $S_t$  in terms of the extrinsic curvature (and hence, of the scalar curvature):
\be \label{JvsK}
\int_{S_t}  K_{ij}K^{ij}   \geq (8\pi)^2 \frac{ J_t^2}{\int_{S_t} \eta}. 
\ee
Putting this into  \eqref{evol2} we get
\be\label{evol3}
\frac{d}{dt}E_G\geq\frac{A_t^{1/2}}{(16\pi)^{3/2}}\left[\int_{S_t} 16 \pi \mu +  (8\pi)^2 \frac{ J_t^2}{\int_{S_t} \eta}  \right].
\ee
Now, let $T$ be the smallest time such that for $t\geq T$ the surfaces given by the flow are convex and such that $S_T$ encloses the object we are 
interested in.

Assuming there are no minimal surfaces in the space, the flow goes to infinity.
We integrate equation (22) from the initial time to $T$ and then to infinity. We first show that the integral of (22) from $0$ to $T$ is just the 
 quasi-local mass $m_{T}$. We start with the scalar constraint equation 
$\bar R=16 \pi \mu +  K_{ij}  K^{ij}$ and change the coordinate $t$ to the areal radius coordinate $\xi(t)=\sqrt{\frac{A_t}{4\pi}}$. We have $
 T  \rightarrow \xi(T)=\sqrt{\frac{A_T}{4\pi}}=\RA$
and $ dt \rightarrow d\xi = \sqrt{\frac{A_t}{16\pi}}dt$, thus:
\be\label{mT}
\int_0^T \frac{A_t^{1/2}}{(16\pi)^{3/2}}\int_{S_t}\bar R dsdt=\frac{1}{16\pi}\int_0^{\RA}d\xi\int_{S_\xi}\bar R ds =m_{T}
\ee
Then for the range $[T,\infty)$  we use equation (25) instead of (22) to explicitely include the angular momentum.
Due to the compact support of $j$,  $J_t$ is conserved outside $S_T$. Therefore $J_t=J_T:=J$, and hence, disregarding the positive term involving
$\mu$
and using the relation between the Geroch energy and the ADM mass at infinity, we obtain
\be\label{evol4}
m_{ADM}\geq\lim_{t \to \infty} E_G(S_t) \geq m_T + \sqrt{\pi}J^2\int_T^\infty \frac{A_t^{1/2}}{\int_{S_t}  \eta }dt.
\ee

Next we need to bound the surface integral of $\eta$. Here it is where convexity plays a role.
We introduce orthogonal coordinates $\theta, \varphi$ for the surface $S_t$ such that $\eta^i=\left(\frac{\partial}{\partial \varphi}\right)^i$. 
One can always do this for axially symmetric 2-surfaces that are diffeomorphic to $S^2$, see for example
\cite{Dain:2011pi}. Then we write the evolution equation \eqref{evolg} in the form\\
\be\label{evoleta}
\frac{\partial}{\partial t}\eta=\frac{\partial}{\partial t}g_{\varphi\varphi}=\frac{2}{H}h_{\varphi\varphi}.
\ee
Recall that in axial symmetry the principal and mean curvatures are given by
\be
\lambda_1=g^{\theta\theta}h_{\theta\theta},\qquad \lambda_2=g^{\varphi\varphi}h_{\varphi\varphi}
\ee
\be
H=g^{ij}h_{ij}=g^{\theta\theta}h_{\theta\theta}+g^{\varphi\varphi}h_{\varphi\varphi}=\lambda_1+\lambda_2.
\ee
Therefore we have
\be
h_{\varphi\varphi}=\frac{\lambda_2}{g^{\varphi\varphi}}=g_{\varphi\varphi}\lambda_2=\eta\lambda_2.
\ee
Putting this into \eqref{evoleta} we find
\be
\frac{\partial}{\partial t}\eta=2\eta\frac{\lambda_2}{H}. 
\ee
Now, we use this equation for $t\geq T$, where the surfaces are strictly convex, and therefore $\lambda_1,\lambda_2>0$ and $\lambda_2\leq H$. This gives us
\be\label{evoleta3}
\frac{\partial}{\partial t}\eta\leq2\eta
\ee
and
\be
\frac{\partial}{\partial t}(\eta dS)=\frac{\partial\eta}{\partial t} dS+\eta dS\leq2\eta dS+\eta dS=3\eta dS.
\ee
Therefore we can write $\eta dS\leq\eta_T e^{3(t-T)}\,dS_T $ and have 
\be
\int_{S_t}\eta\leq e^{3(t-T)}\int_{S_T}\eta_T\leq e^{3(t-T)}A_T\max_{S_T}\eta ,\qquad t\geq T.
\ee
Also, using $A_t=e^{(t-T)}A_T$ we bound
\be
\begin{split}
%\begin{align*}
m_{ADM}\geq& \; m_T + \sqrt{\pi}J^2\int_T^\infty \frac{A_T^{1/2}e^{(t-T)/2}}{e^{3(t-T)}A_T\max_{S_T}\eta }dt\\
\geq& \;m_T+\frac{2\sqrt{\pi}}{5}\frac{J^2}{A_T^{1/2}\max_{S_T}\eta  }
%\end{align*}
\end{split}
\ee
where $m_T$, given by \eqref{mT} comes from integrating \eqref{evol2} in the interval $t\in (0,T)$.

Finally we write this expression in terms of the areal and circumferential radii \eqref{size}
and obtain \eqref{mainineq}.

\end{proof}

\textbf{Remarks}

\vspace{0.5cm}

 The inequality \eqref{mainineq} is global in nature as it involves the ADM mass. This is different from the quasi-local inequalities mentioned in the 
 Introduction. We also note that the Geroch energy seems to be a very appropriate quasi-local mass for our purposes. On one hand, it converges to the ADM mass. And
 on the other hand, it is directly related  to the mean curvature of the surface, and therefore, its time derivative along the flow is related to the scalar 
 curvature of the initial data, which we were able to bound in terms of the angular momentum.
 The natural question is whether one can write a different quasi-local quantity $E$, having both properties, that is, $E\to m_{ADM}$ and $E\sim H^2$ and such 
 that it produces a better, sharper 
 inequality.

 \vspace{0.5cm}

  The inequality \eqref{mainineq} is linear in the ADM mass and quadratic in the
  angular momentum as a result of the linear dependence of $dE_G/dt$ with the
  scalar curvature $\bar R$. This is to be confronted with the linear relation
  suggested by the Bekenstein conjecture for the entropy of macroscopic objects
  \cite{PhysRevD.23.287} \cite{PhysRevD.61.024018}. The positivity of the
  entropy function implies the bound $\mathcal E\geq |J|/\mathcal R$ where
  $\mathcal E$ is the total energy and $\mathcal R$ is the radius of the
  smallest sphere that encloses the object (note the similar dependence for the
  case of black holes  \cite{Reiris:2013jaa}). However, the quadratic relation
  seen in \eqref{mainineq} is in accordance with the Newtonian limit
  \eqref{Enewton2}. Indeed, the paralelism is clear if one takes the hoop radius
  as the areal radius $\RA$. An important question is whether this discrepancy between the Bekenstein 
  conjecture and our own result for the relation between mass and angular momentum is due to the hypotheses in 
  our theorem. We do not have a clear answer for that, but we note, as 
pointed out by Unruh and Wald \cite{Unruh:1982ic}, that the Bekenstein bound is not essential for the validity of the 
generalized laws of thermodynamics \footnote{We thank an anonimous referee for making this observation to us.}, and might not be optimal.

\vspace{0.5cm}
 
 Disregarding the non-negative $m_T$ term and re-writing inequality \eqref{mainineq} in the form
 \be
 \RC^2\geq\frac{1}{5}\frac{J^2}{m_{ADM}\RA}
 \ee
we see that if we fix the total mass and the area of the object, then the greater the angular momentum, the greater the circumferential radius. This result agrees with our expectation and experience 
that rotation produces flattening. We see that rotation sets restrictions onto how prolate an ordinary object can be. Also, this inequality gives information about
localization in space. It implies that for given total energy $m_{ADM}$, a rotating region can not be too small. Note that there is an important difference between 
this kind of argument and the ones used in black hole formation criteria, where only quasi-local quantities are taken into account. We will come back to 
this issue below.

\vspace{0.5cm}

 No equations of state were assumed. Inequality \eqref{mainineq} is a consequence of Einstein constraint equations and does not occur in pure Newtonian theory
 (recall that in the introduction we obtained a similar inequality 
\eqref{Enewton2} after assuming the Hoop condition \eqref{hoop}). Matter enters 
the inequality only 
 via the $m_T$ term and the dominant energy condition needed to make the Geroch energy have the positivity and monotonicity properties.

\vspace{0.5cm}

When Maxwell fields are taken into account, the inequality can be extended 
using similar techniques, provided that there is no electromagnetic contribution to the angular momentum outside the body. For the treatment of the electromagnetic contributions 
we  follow the work of Dain \cite{Dain:2015ira}, where the case of time 
symmetric data with no rotation, $J=0$, was considered.

We write the energy density in the form
\be
\mu=\mu_{\mbox{(not EM)}}+\frac{E^2+B^2}{8\pi}
\ee
where $\mu_{\mbox{(not EM)}}$ stands for non-electromagnetic matter fields 
satisfying the dominant energy condition, and $E^i$, $B^i$ are 
the electromagnetic fields. Then the integral in \eqref{evol2} has tree terms, 
the ones involving  $\mu_{\mbox{(not EM)}}$ and $K_{ij}  K^{ij}$ are 
treated in exactly the same manner as before. The term involving the 
electromagnetic contribution to the energy density could be bound in terms of the 
electric charge by following \cite{Jang:1979zz}.  We sketch the proof below.
\be
\begin{split}
%\begin{align*}
 \frac{A_t^{1/2}}{(16\pi)^{1/2}} \int_{S_t}  \mu &= 
\frac{A_t^{1/2}}{(16\pi)^{3/2}} \int_{S_t} 2 (E^2 + B^2)\geq \frac{2 
A_t^{1/2}}{(16\pi)^{3/2}} \int_{S_t}  (E^j \nu_j)^2 \\
 &\geq \frac{2 A_t^{1/2}}{(16\pi)^{3/2}} \frac{\left(\int_{S_t}  E^j \nu_j 
\right)^2 }{A_t} \geq \frac{2 \left(\int_{S_t}  E^j \nu_j\right)^2 
}{(16\pi)^{3/2} A_t^{1/2} }   =
 \frac{\sqrt{\pi} Q_t^2}{2 A_t^{1/2}}. 
%\end{align*}
\end{split}
\ee
In the third step we used the H\"older inequality, and in the fifth step we used 
the Gauss theorem and the definition of electric charge
\be
Q_t=\frac{1}{8\pi}\int_{S_t}E^j \nu_j.
\ee
Let $T$ be the smallest value of the flow parameter such that for $t\geq T$
the surfaces given by the flow are convex and such that it encloses the object 
(we still need this convexity condition to control the rotation part 
of the evolution). 
Then since $S_t$ lays outside the object, the electric charge is 
$Q_t=Q_T:=Q$, and thus for $t>T$ the time derivative of the Geroch energy is 
bounded by
\be
\label{evol6}
\frac{d}{dt}E_G\geq    \frac{\sqrt{\pi}A_t^{1/2}}{\int_{S_t} \eta }J^2 + \frac{\sqrt{\pi} }{2 A_t^{1/2}} Q^2.
\ee

Integrating \eqref{evol2} from 0 to infinity, using \eqref{evol6} and
 \eqref{propEG}, we obtain:
\be\label{mainineqQ}
m_{ADM}\geq  m_{T} +\frac{1}{2} \frac{ Q^2}{ \RA} + 
\frac{1}{5}\frac{J^2}{ \RA \RC^2}. 
\ee
where again $m_T$, given by \eqref{mT}, comes from integrating \eqref{evol2} in 
the interval $t\in (0,T)$. Note that the electric part of 
this inequality does not depend on the circumferential radius, only the size 
(measured by the surface area) is relevant in that case.

\vspace{0.5cm}

The final remark we want to make is about where we start the IMCF. In
our theorem and treatment so far, we started out from a point on the symmetry axis in $M$ and cover
all $M$ with the surfaces $S_t$. In particular, we look at one of these
surfaces, that we call $S_T$, to obtain information about the object's physical
and geometrical properties. However that might not be the most convenient way of
studying the object, as one might lose control over where $S_T$ is or how far
away from the object it is. We know from Huisken and Ilmanen's work \cite{Huisken01} that for sufficienty large times, the surfaces are convex (as 
they approach round spheres at infinity), but clearly, one of such surfaces near infinity would not give a good description of the object's size. 
Unfortunately, one does not \textit{a priori} know where $S_T$ will be located.
An alternative approach is to start the flow from a convex
surface $S_0$, chosen in such a way that its evolution preserves convexity and,
more importantly, such that it coincides with the object's surface or it is the
smallest surface enclosing it.  By following this procedure we arrive at the
following inequality
\be\label{mainineq2}
m_{ADM}\geq E_G(S_0) +\frac{1}{5}\frac{J^2}{\RA \RC^2}
\ee
where now $E_G(S_0)$ is the Geroch energy of the initial surface $S_0$ and the quantities $J$, $\RA$, $\RC$ refer to $S_0$ as well.
This approach is particularly useful for numerical calculations and it is 
the one we use in the next section.

\vspace{1cm}
There are important open questions we want to address next. The first one being the convexity condition, we present some ideas and numerical results in the next section. 
Secondly is the appropriate notion of size one should use. In axial symmetry, the areal 
and circumferential radii are well defined and are related to relevant properties of the region under study, that is, localization and rotation. However, extending these ideas outside axial 
symmetry does not seem straightforward and a more general measure of size should be introduced. Another issue has to do with the boundary between ordinary objects (like the ones we study 
here) and black holes. More precisely, could we use this inequality to formulate a black hole formation criteria similar to the one proved by Khuri \cite{Khuri:2015zla}? We see that there
are differences between our work and that of Khuri, because we include a global quantity, the ADM mass. Therefore, our inequality does not lend itself directly to an  argument of the 
kind 'if the angular momentum is too localized, then a black hole will form'. In our main result, the localization of angular momentum, represented by the ratio 
$(angular\; momentum)(size)^{-1}$ is compared to the total energy.
Finally, we would like to understand better the relation between our 
approach to study inequalities for ordinary objects and the arguments involved in 
the derivation of the Penrose inequality \cite{Anglada:2016}.

\section{Numerical tests}\label{sec:num}

\renewcommand{\arraystretch}{1.3}

An important ingredient in our result is the convexity of surfaces $S_t$ along
the IMCF evolution. In this section we want to show that, in particular
examples, this property of the flow holds even when the surfaces $S_t$ are close
to the object.  Also, we want to evaluate the relative importance of the term
involving the angular momentum in our inequalities. For these purposes
we want to compute the IMCF starting with an initial surface which
is good to represent the physical and geometrical properties of the object, and
therefore we study the inequality in the form \eqref{mainineq2} to choose an
initial surface $S_0$ that is \textit{as close as possible} to $\partial
\Omega$.
  
We expect that within our approach, the convexity of the flow could be relaxed.
This is suggested by the convergence properties of the last integral in
\eqref{evol4}. This point will be studied in detail \cite{Anglada:2016}.

Every numerical example in this section is computed in two stages.
First, the elliptic problem for the conformal factor $\Psi$ is solved, giving an
initial data set.  This initial data set is completely determined by a compact
material object with maximum angular momentum $J$ compatible with the dominant
energy condition (see Appendix \ref{appendix_A}). Secondly, the IMCF equation is
used to compute the evolution of a convex initial surface $S_0$ that tightly
encloses the object. The preservation of the surface's convexity along the flow
and our main inequality are then checked.

\subsection{Computation of the conformal factor $\Psi$}

We restrict our numerical examples to initial data sets which are maximal,
asymptotically flat, and conformally flat. The exact set up and assumptions, and
the derivation of the equations involved are described in the Appendix \ref{appendix_A}. 

Let $(\rho, \varphi, z)$ be cylindrical coordinates on the conformal, flat
geometry, adapted to the axial symmetry of our problem, where we need to solve
the equation for $\Psi$. Because of the axial symmetry, no function depends on
$\varphi$.
The conformal factor $\Psi$, and thus the initial data set defined on the
initial slice $M= {\mathbb R}^3,$ is the solution of the semi-linear
elliptic problem \eqref{eq:47},\eqref{eq:57}, deduced in the Appendix \ref{appendix_A}.
In cylindrical coordinates, this problem is
\begin{equation}\label{problem_psi}\begin{split}
\Delta \Psi &=- \frac{2\pi~a~\rho}{\Psi^3}-\frac{ |\partial f|^2 \rho^2
}{4\Psi^{7}}, \\
\partial_\rho \Psi(\rho=0)&=0, \quad \lim_{r \to \infty}\Psi(\rho,z) = 1, \quad r = \sqrt{\rho^2 + z^2},
\end{split}
\end{equation}
where $\Delta$ is the flat Laplacian and the function $f(\rho,z)$ is, in turn, a
solution of the linear elliptic problem
\begin{equation}\label{problem_f}\begin{split}
\partial^2_\rho f+\partial^2_z f+\frac{3\partial_\rho f}{\rho} &= -8\pi a,\\
\partial_\rho f(\rho=0)=0, \quad &\lim_{r \to \infty}f(\rho,z) = 0, \quad r = \sqrt{\rho^2 + z^2}.
\end{split}
\end{equation}
The positive function $a(\rho,z)$ appearing in the source of both equations is a free
function that determines the matter content and the angular momentum content of
the initial data. In our examples we choose this function to have compact
support.

The angular momentum content, equation \eqref{eq:56}, in any region $\Omega$ of
the initial slice is given by
\begin{equation}\label{J_of_Omega}
J = -\int_\Omega a \rho^2 dv_0.
\end{equation}
Here and in what follows, $dv_0$ denotes the volume element on the flat,
conformal geometry. Once $\Psi$ is computed, by solving the problem
\eqref{problem_psi},\eqref{problem_f}, the rest of the physically relevant
quantities (see Appendix \ref{appendix_B}) can be computed. The area of an axisymmetric surface $\partial\Omega$
is given by equation \eqref{eq:58}, from which areal radius
$\RA$ is obtained. The circumferential radius
$\RC = \mathcal{C}/2\pi,$ is computed by finding the
greatest axisymmetric circle  $\mathcal{C}$ of $\partial\Omega.$ The average
baryonic mass density of the object is $\rho_b = M_b/V$ with the baryonic mass
and volume of the object given by
\begin{equation}
M_b = \int_{\Omega} \frac{\rho\,a}{\Psi^2} dv_0,\quad \mbox{and} \quad V = \int_{\Omega} \Psi^6
dv_0. \label{b_mass_and_volume}
\end{equation}
Here $\Omega$ is the support of the function $a$.

The ADM mass of the initial data can be computed as a volume integral on the
whole space (equation \eqref{m_adm}).

To solve the problem \eqref{problem_psi},\eqref{problem_f}, we proceed as follows.
First, in all our examples we choose, for simplicity, matter content (function $a$)
to satisfy reflection symmetry on the plane $z=0,$ i.e., $a(\rho,-z) =
a(\rho,z).$ This property effectively reduces the problem to half size; one
needs to solve only for $z\ge 0.$ Regularity of the solution at $z=0$ becomes
homogeneous Neuman boundary condition at $z=0.$ Second, we compactify the
problem by introducing new coordinates so that the whole quarter $\rho$--$z$ plane 
maps to a unit square. The new coordinates are
\begin{equation}\label{xy_coordinates}
x = \frac{\rho}{\rho+\rho_H}, \quad y = \frac{z}{z+z_H},
\end{equation}
where the parameters $\rho_H>0$ and $z_H>0$ can be freely chosen. The symmetry
axis, $\rho=0$ maps to $x=0,$  $\rho=\infty$ maps to $x=1$ and $\rho=\rho_H$ maps
to $x=1/2.$ Analogously, the plane $z=0$ maps to $y=0,$ $z=\infty$ maps to
$y=1$, and $z=z_H$ maps to $y=1/2.$ Summarizing, the compact elliptic problem we need to
solve, in the square $0\le x,y\le 1,$ is
\begin{equation}\label{problem_f_xy}
\frac{(1-x)^4}{\rho_H^2} f_{xx} + \frac{(1-x)^3(3-2x)}{x\rho_H^2} f_x +
\frac{(1-y)^4}{z_H^2} f_{yy} - 2\frac{(1-y)^3}{z_H^2} f_y = -8\pi a,
\end{equation}
with boundary conditions
\begin{equation}
f_x(0,y) = f_y(x,0) = 0, \quad f(1,y) = f(x,1) = 0,
\end{equation}
and
\begin{multline}\label{problem_Psi_xy}
\frac{(1-x)^4}{\rho_H^2} \Psi_{xx} + \frac{(1-x)^3(1-2x)}{x\rho_H^2} \Psi_x +
\frac{(1-y)^4}{z_H^2} \Psi_{yy} - 2\frac{(1-y)^3}{z_H^2} \Psi_y \\
= -2\pi\rho_H\frac{x\,a}{(1-x)\Psi^3} - \frac{\rho_H^2}{4} \frac{x^2|\partial
f|}{(1-x)^2\Psi^7},
\end{multline}
with boundary conditions
\begin{equation}
\Psi_x(0,y) = \Psi_y(x,0) = 0,\quad \Psi(1,y) = \Psi(x,1) = 1.
\end{equation}
To compute the solution to these problems we use finite differences. We
discretize $x$ and $y$ in uniform grids
\begin{equation}\label{discrete_coordinates}
x_i=hi, \quad y_j=hj, \quad i,j = -2, -1, 0, 1, 2, \dots, N-1, N,
\end{equation}
where the mesh size is $h=1/N,$ and we choose $N = 2^k,$ being $k$ a positive
integer. The index values from 0 to $N$ cover the unit square
including the boundaries, while the ghost values $-2, -1$ are used to impose
the homogeneous Neuman boundary conditions.

All functions of the problem become grid-functions. To discretize the equations
we use standard difference operators which are fourth order accurate all over
the domain, centered in the interior of the domain, and semilateral and lateral
close to and on the outer border of the domain.

Let us denote $v_{i,j} = v(x_i,y_j)$ the solution of any of the equations
\eqref{problem_f_xy} or \eqref{problem_Psi_xy}. The values $v_{N,j},~j=0,1,\dots
N$ and $v_{i,N},~i=0,i,\dots,N$ are fixed by the Dirichlet boundary condition at
infinity. The homogeneous Neuman boundary conditions at $x=0$ and $y=0$ are
imposed by setting the ghost points $i=-2,-1$ and $j=-2,-1$ in the way
\begin{equation}\label{ghost_values}\begin{split}
v_{-2,j} = v_{2,j}, \quad v_{-1,j} = v_{1,j}, \quad j=0, 1, \dots, N,\\
v_{i,-2} = v_{i,2}, \quad v_{i,-1} = v_{i,1}, \quad i=0, 1, \dots, N.
\end{split}
\end{equation}

The algebraic linear system of equations obtained for $f_{i,j}$ could be solved
by a direct method (like $LU$-decomposition plus Gaussian elimination). The
algebraic, non-linear system of equations obtained for $\Psi_{i,j}$, however,
has to be solved by an iterative method. For simplicity we decided to use
iterative methods to solve both equations. 

To accelerate convergence, we use a
multigrid algorithm \cite{briggs_mg}. In all cases we use under relaxed Jacobi
smothers on the finest and intermediate grids, and over relaxed Jacobi on the
coarsest grid. The number of grid levels one can use in these problems, having
in mind the span of the difference operators, is upper bounded by $k-3.$
The grid functions are passed from a fine grid to the next coarser grid by
simple restriction. The prolongation of a grid function from a coarse
grid to a finer grid is carried out by cubic Hermite interpolation.
  
The initial iteration has to be chosen carefully in these iterative schemes, so
that the overall method converges. In some cases it is
enough to choose $f_{i,j} = 0$ and $\Psi_{i,j} = 1,$ but sometimes, on fine
grids, it is necessary to start with an interpolated coarser-grid solution.

In our code the multigrid algorithm is applied as a sequence of
$V$-cycles\cite{briggs_mg}. After a number of $V$-cycles, the maximum norms of
the residual and the increment of the solution $\delta v_{i,j}$ on the finest
grid are checked.  If the relative values of these two norms, with respect of
the norm of the solution itself, are smaller than $\epsilon = 10^{-10},$ the
iterations are stopped. In the equation for $\Psi_{i,j}$, the non linearity is
treated with the full approximation storage (FAS) algorithm\cite{briggs_mg}. For
every example presented in this paper, we compute the conformal factor $\Psi$ on
three grids with $N=128$, $N=256$ and $N=512$, and use the three solutions to
check convergence of the difference scheme.
  
All the integrals used to compute physically relevant quantities are
approximated by the Simpson's rule.

\subsection{Numerical setup for the IMCF equation}

We want to solve numerically the ordinary problem given by
\eqref{IMCF_equation}-\eqref{Lu_operator}, starting
with an initial surface at $t=0$ that is convex and that just encloses the
object under study. This problem, though differentially ordinary, is tricky to
be solved numerically. This is so because its exact solutions diverge
exponentially with time. The problem is therefore unstable; one has to be
careful when choosing a numerical scheme to approximate its solution, since
any perturbation (deviation from the exact solution) will also grow exponentially
with time. What we need is a conditionally stable scheme, which basically means
that the numerically computed approximation does not diverge faster than the
exact solution\cite{Kreiss-Ortiz-book}.

We discretize the variable $\theta$ on a uniform grid and approximate the
derivatives with respect to $\theta$ by standard, centered, fourth order accurate finite
difference operators. The regularity of $v(\theta,t)$ at $\theta=0$ and the
regularity plus reflection symmetry in the $z=0$ plane imply Neuman boundary
conditions at $\theta=0$ and $\theta=\pi/2.$ These boundary conditions are well
handled (imposed) by using two ghost points on each side of the interval
$[0,\pi/2].$ 

To integrate in time, one can choose a fourth order accurate method too.
However, this is not worth the price. The reason is that the eigenvalues of the
linearized equation are positive and large and then, to have a conditionally
stable scheme, the time step has to be very small as compared with the mesh
size and a low order method can do the job. The explicit Euler method turned out
to be appropriate for our problem because the overall method (fourth order in
space and first order in time) becomes conditionally stable for the
discretizations we use when the time step is comparable to the fourth power of
the mesh size. If the time step is not small enough, regardless of the time
integration method, the method becomes unstable and the solution breaks with
peaks that diverge in a few time steps. In our calculations we use a mesh size
$\delta_\theta = \pi/200$ and time steps as small $2\times 10^{-6}.$

\subsection{Object models and results}

In our setup an initial data is completely defined by the function $a(\rho,z)$.
The finite difference approximations described in the previous section have
truncation errors that involve, in the leading term, fifth and sixth order
derivatives of the solution. Therefore, to not ruin the accuracy of the computed
approximation and the convergence rate of the iterative methods, the function
$a$ defining the object has to be at least $C^4$-smooth on the computational
domain. So, we study objects given by compactly supported functions $a(\rho,z)$
which are defined in terms of the cutoff polynomial $q(s)$ given by
\begin{equation}\label{q_polynomial}
q(s) = 1 - s^4\Bigl(1 - 5(s-1) + 15(s-1)^2 - 35(s-1)^3 + 70(s-1)^4\Bigr).
\end{equation}
All the magnitudes given in this section are in geometrical units in the cgs
system.

\subsubsection{Spheroidal objects}

For these examples we choose the function $a$ to have support on an axially
symmetric ellipsoid, which can be either oblate or prolate (being a sphere a
particular case),
\begin{equation}\label{alpha_spheroids}
a(\rho,z) = \begin{cases}
a_0\, q(s), & 0\le s< 1,\\
0, & 1\le s,
\end{cases}
\end{equation}
where
\[
\quad s =\sqrt{\Bigl(\frac{\rho}{R}\Bigr)^2 + \Bigl(\frac{z}{Z}\Bigr)^2}
\]
and $R, Z$ are positive constant. The functions $a(\rho,z)$ so defined are
$C^4$-smooth on the whole domain. The support region $\Omega$ is given by $s<1,$
while the surface of the object, $\partial\Omega$, is given by $s=1.$
 
In Table \ref{table_spheroidal_objects} we present various spheroidal objects of
different densities and sizes. The parameters $a_0$, $R$ and $Z$ are displayed
together with the resulting angular momentum $J$. The object called $NS$ is
chosen to be slightly oblate, and has parameters so that the baryonic mass
density and object size are comparable to those of a neutron star (see Table
\ref{table_object_surfaces} and, for example, \cite{0004-637X-757-1-55}). The following three objects, $P$, $O$ and
$VO$, have the same value of $a_0$ and similar size parameters, which result
in comparable baryonic densities. $P$ is a prolate spheroid; $O$ an
oblate spheroid and $VO$ a very oblate spheroid. The object called $S$ is a
larger and lighter, slightly prolate, spheroid with parameters chosen so that
the resulting size and baryonic mass density are comparable to those of the sun.

% TABLA 1
\begin{table}[t]
\begin{center}
{\small
\begin{tabular}{@{}lcccc@{}}\toprule
{\bf Obj.} & $a_0$ & $R$ & $Z$ & $J$ \\ % run REF (Psi)
\hline
NS & $6.00\times 10^{-18}$ & $7.50\times 10^5$ & $7.30\times 10^5$ & $1.4614\times 10^{11}$ \\ % 1490283729
P & $6.00\times 10^{-18}$ & $0.80\times 10^6$ & $1.00\times 10^6$ & $2.5916\times 10^{11}$ \\ % 1490214043
O & $6.00\times 10^{-18}$ & $1.00\times 10^6$ & $0.80\times 10^6$ & $5.0617\times 10^{11}$ \\ % 1490264199
VO & $6.00\times 10^{-18}$ & $1.00\times 10^6$ & $0.30\times 10^6$ & $1.8981\times 10^{11}$ \\ % 1490299113
S & $2.90\times 10^{-38}$ & $6.70\times 10^{10}$ & $6.5\times 10^{10}$ & $4.0056\times 10^{15}$ \\ % 1490438492
\bottomrule
\end{tabular}
}
\caption{Parameters defining various spheroidal objects.}\label{table_spheroidal_objects}
\end{center}
\end{table}
In Table \ref{table_object_surfaces} we show, for each object in Table
\ref{table_spheroidal_objects}, the average baryonic density of the
object, the physical size of the object represented by the radii
$\RA$ and $\RC$ of the object's surface
and the ADM mass $m_{ADM}$ of the initial data. The last column in the table
indicates whether the object's surface turns out to be convex or not.
% TABLA 2
\begin{table}[t]
\begin{center}
{\small
\begin{tabular}{@{}lccccc@{}}\toprule
{\bf Obj.} & $\rho_b$ & $\RC$ & $\RA$ & $m_{ADM}$ & convex? \\ % run REF (Psi) and run REF (imcf)
\hline
NS & $4.3247\times 10^{-14}$ & $1.0461\times 10^6$ & $1.0227\times 10^6$ & $2.7103\times 10^5$ & yes \\ % 1490283729  and 1490440115
P & $3.1137\times 10^{-14}$ & $1.2309\times 10^{6}$ & $1.2859\times 10^6$ & $3.9647\times 10^5$ & yes \\ % 1490214043 and 1490294581
O & $2.3503\times 10^{-14}$ & $1.6462\times 10^{6}$ & $1.5379\times 10^6$ & $5.6641\times 10^5$ & yes \\ % 1490264199 and 1490444802
VO & $5.7604\times 10^{-14}$ & $1.3199\times 10^6$ & $1.0826\times 10^6$ & $2.8427\times 10^5$ & no \\ % 1490299113 and 1490537961
S & $1.1212\times 10^{-28}$ & $6.7000\times 10^{10}$ & $6.5796\times 10^{10}$ & $1.3695\times 10^5$ & yes \\ % 1490438492 and 1490542478
\bottomrule
\end{tabular}
}
\caption{Physical quantities for the objects defined in Table
\ref{table_spheroidal_objects}.}\label{table_object_surfaces}
\end{center}
\end{table}

% TABLA 3
\begin{table}[b]
\begin{center}
{\small
\begin{tabular}{@{}lccccc@{}}\toprule
{\bf Obj.} & $R$ & $Z$ & $\RC$ & $\RA$ & $J^2/(5 \RA \RC^2)$ \\
\hline
NS & $7.50\times 10^5$ & $7.30\times 10^5$ & $1.0461\times 10^6$ & $1.0227\times 10^6$ & $3.8168\times 10^3$ \\
P & $0.80\times 10^6$ & $1.00\times 10^6$ & $1.2309\times 10^6$ & $1.2859\times 10^6$ &  $6.8942\times 10^3$\\
O & $1.00\times 10^6$ & $0.80\times 10^6$ & $1.6462\times 10^{6}$ & $1.5379\times 10^6$ & $1.2295\times 10^4$ \\
VO & $1.00\times 10^6$ & $0.30\times 10^6$ & $1.3199\times 10^6$ & $1.1497\times 10^6$ & $3.5984\times 10^3$ \\
S & $6.70\times 10^{10}$ & $6.50\times 10^{10}$ & $6.7000\times 10^{10}$ & $6.5796\times 10^{10}$ & $1.0865\times 10^{-2}$ \\
\bottomrule
\end{tabular}
}
\caption{Parameters for the initial surface $S_0$  that preserves convexity when
evolved with the IMCF. The geometrical size parameters $\RC$
and $\RA$ that measure the size of the object are given
together with the quotient on the right hand side in the inequality
\eqref{mainineq}. This last quantity should be compared with the $m_{ADM}$
given in the Table \ref{table_object_surfaces}.}\label{table_NT_surfaces}
\end{center}
\end{table}
In Table \ref{table_NT_surfaces} we show, for each object in Table
\ref{table_spheroidal_objects}, the parameters for the smallest spheroidal
surface that encloses the object and is convex. In the cases where the object's
surface $\partial\Omega$ is convex, we take it as the initial surface $S_0$. In
cases where the object's surface is not convex, we take a surface, also defined
as an ellipsoid, with larger $R$ or $Z$ so that the surface is convex. We
computed the evolution of these surfaces and found that in all cases the
convexity is preserved by the IMCF evolution. As examples of these evolutions we
show in Figure \ref{fig_imcf_surfaces} plots of the surfaces for various times
for the NS and VO cases.
\begin{figure}[ht]
%\begin{center}
\includegraphics[width=6cm]{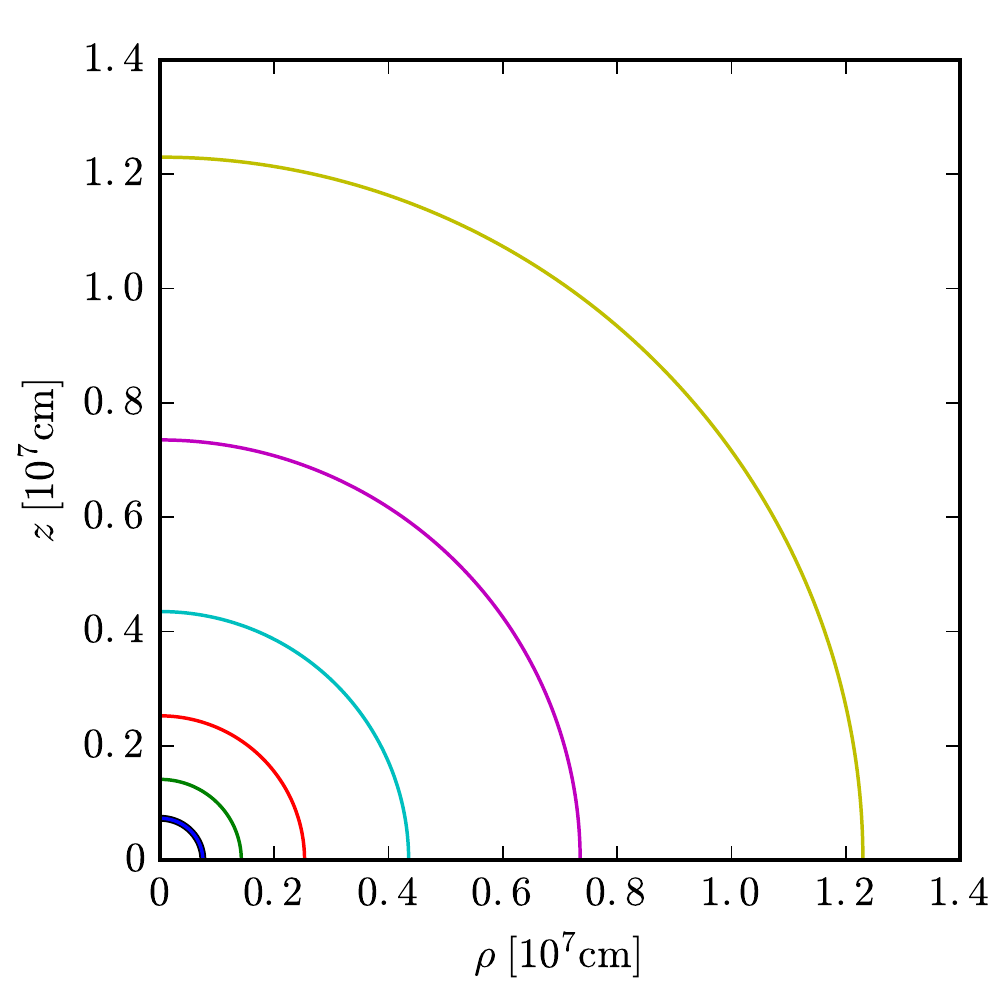}\includegraphics[width=6cm]{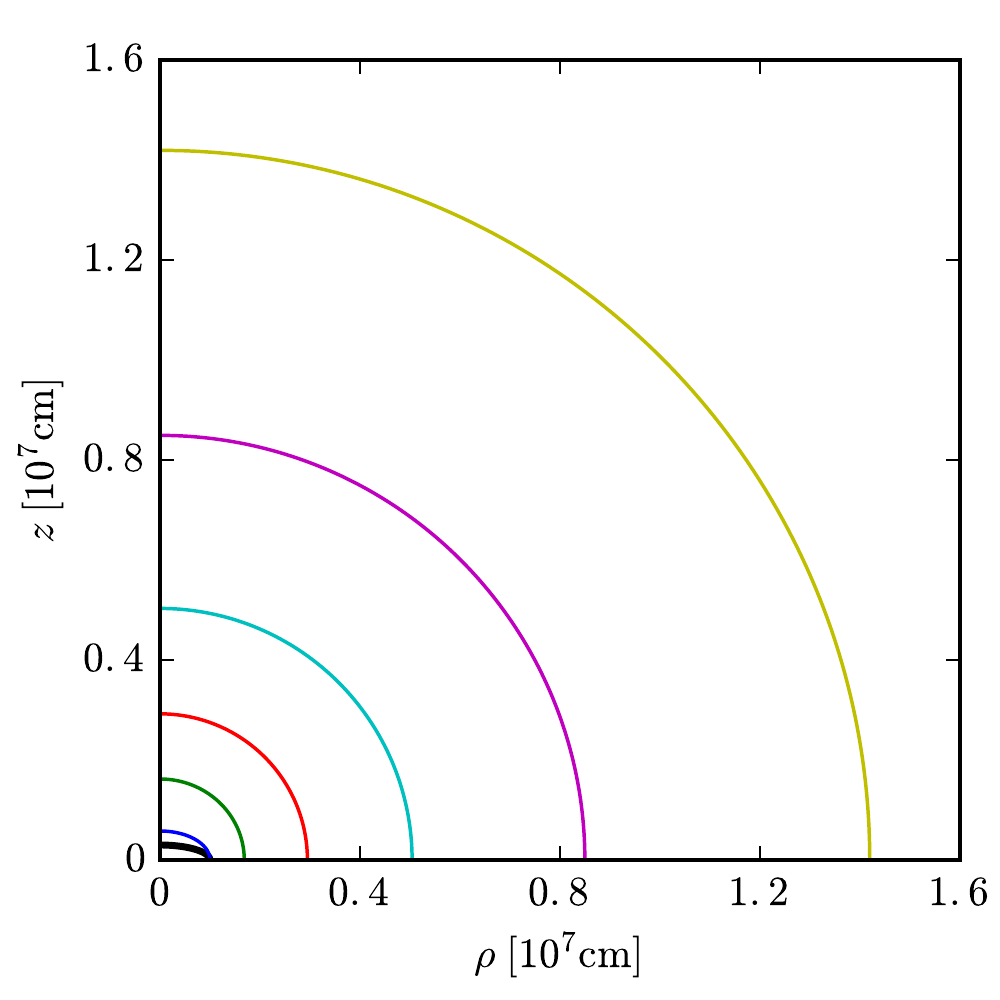}
%\end{center}
\caption{IMCF evolution of the surfaces $S_t$ corresponding to the objects NS
(left plot) and VO (right plot) of Table \ref{table_NT_surfaces}. The plots
show, from the origin outwards, the surface of the object (thick curve), and the
surfaces at times $t=0, 1.0, 2.0, 3.0,  4.0, 5.0$ in a piece of the $\rho$--$z$ plane on the
flat geometry. In the NS case, the initial surface is coincident with the
object's initial surface. In the VO case the initial surface is larger than the object's
surface, so chosen so that it is convex.}\label{fig_imcf_surfaces}
\end{figure}

Figure \ref{fig_princ_curv} shows the plot of the principal curvatures (see
equations \eqref{lambda_phi} and \eqref{lambda_theta} in Appendic \ref{appendix_C}) of the
object's surface for the cases  NS and VO. In the first case the positivity of both
curvatures show the surface of the body is convex. In the second case the
principal curvatures become negative, showing the body surface is not convex. 
\begin{figure}[ht]
\begin{center}
\includegraphics[width=6cm]{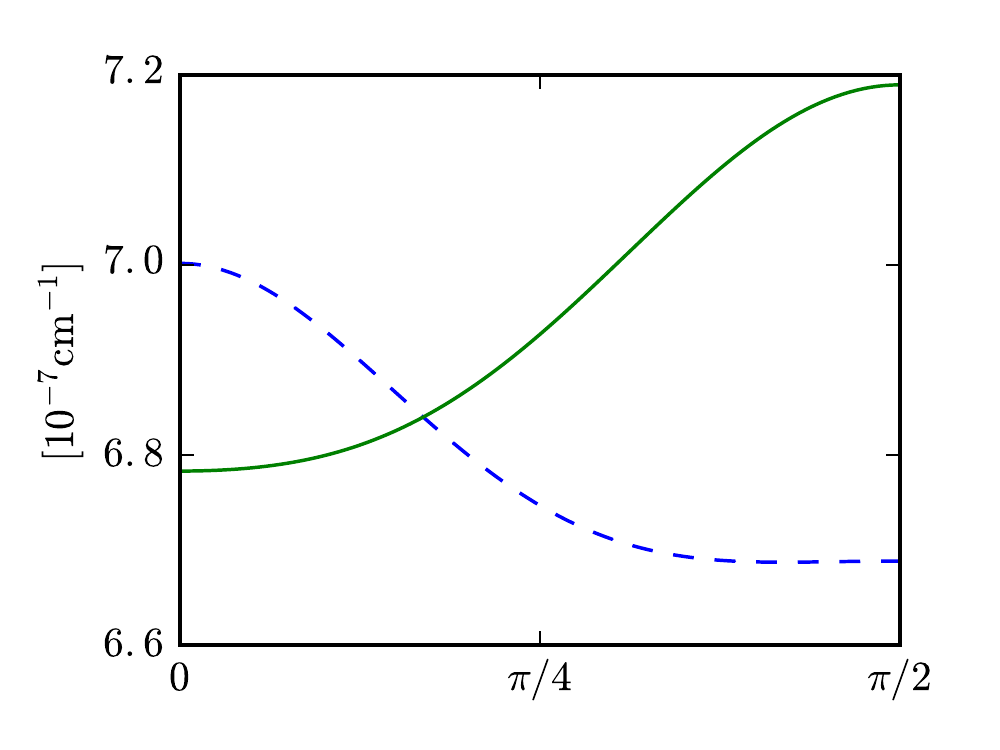}\includegraphics[width=6cm]{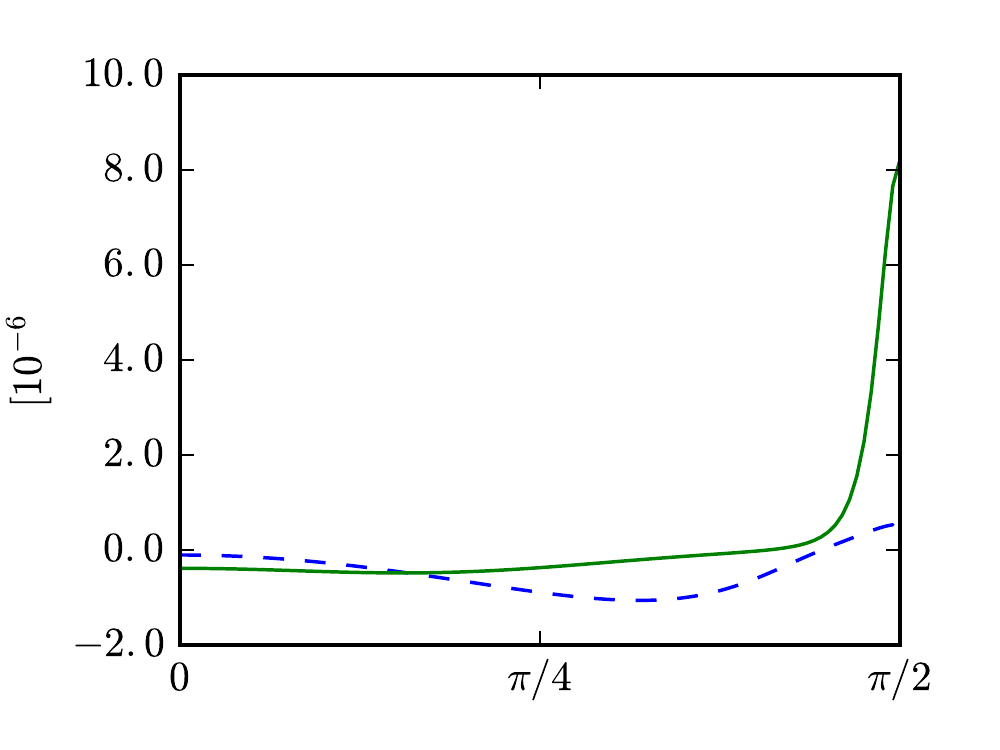}
\end{center}
\caption{Principal curvatures of the object's surface for the cases NS (left
plot) and VO (right plot) of Table \ref{table_spheroidal_objects}.
$\lambda_{\varphi}$ is represented in dashed lines and $\lambda_{\theta}$ in solid lines.}\label{fig_princ_curv}
\end{figure}

It is important to check, in all the examples, that the surfaces $S_t$ not only
remains convex along the evolution, but also that they approach spheres as time
increases.
This is clearly seen in plots of the principal curvatures $\lambda_\theta$, $\lambda_\varphi$. In the Figures
\ref{fig_P_ul} and \ref{fig_VO_ul} we plot the quotients
$\lambda_\theta(\theta)/(1/r(\theta))$ and
$\lambda_\varphi(\theta)/(1/r(\theta))$, as functions of $\theta,$ for various times, for the two more
extreme cases in Table \ref{table_object_surfaces}: P and VO objects. Both
figures show how the principal curvatures of $S_t$ approach the principal
curvatures of a sphere as $t$ grows.
\begin{figure}[h]
\begin{center}
\includegraphics[width=12cm]{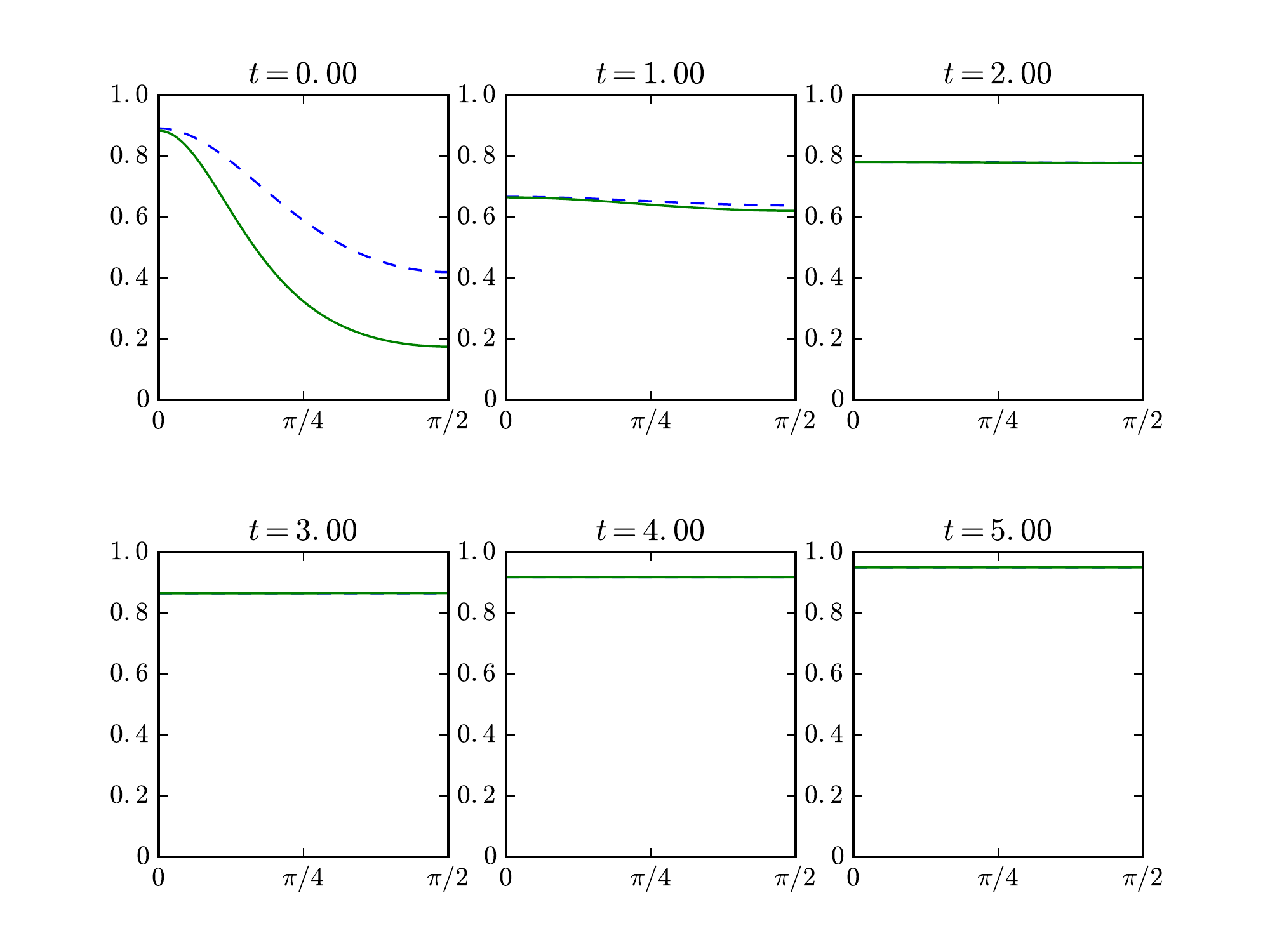}
\caption{Plots of $r\lambda_\varphi$ (dashed lines) and $r \lambda_\theta$
(solid lines) as functions of $\theta$ along the IMCF evolution of the surface
of Table \ref{table_NT_surfaces} for the object P.}\label{fig_P_ul}
\end{center}
\end{figure}
\begin{figure}[h]
\begin{center}
\includegraphics[width=12cm]{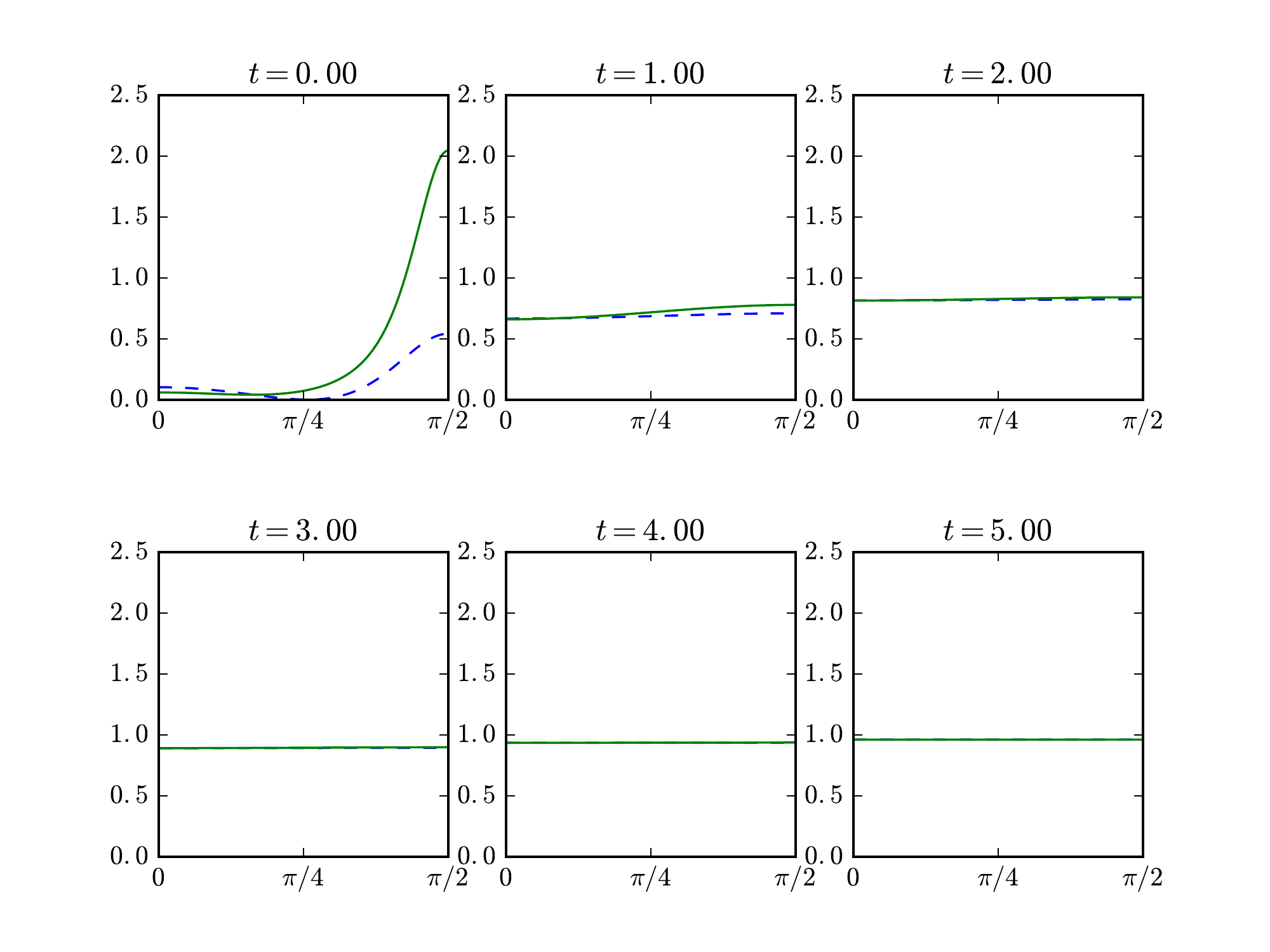}
\caption{Plots of $r\lambda_\varphi$ (dashed lines) and $r \lambda_\theta$
(solid lines) as functions of $\theta$ along the IMCF evolution of the surface
of Table \ref{table_NT_surfaces} for the object VO.}\label{fig_VO_ul}
\end{center}
\end{figure}

\subsubsection{Concave object}\label{sec_532}

As a final example we compute the initial data corresponding to an object which
is concave even as seen on the conformal flat geometry. We study an object
whose function $a(\rho,z)$ is given by equation \eqref{alpha_spheroids} but in
this case the parameter $s$ is defined as
\begin{equation}\label{s_concave}
s = \frac{\sqrt{\rho^2 + z^2}}{B\Bigl(D + \frac{\rho^2}{\rho^2+z^2}\Bigr)}.
\end{equation}
As initial surface $S_0$ to evolve the IMCF we use a spheroidal surface as
before. The object parameters we use in this example are: $B = 5.0\times 10^5$,
$D = 0.3$, $a_0 = 6.0\times 10^{-18}.$ Thus $J = 4.1743\times 10^{10}$, and we
obtain $m_{ADM} = 1.0982\times 10^5.$ The parameters we use for the convex
spheroidal surface $S_0$ enclosing the body are: $R = 6.60\times 10^5,$ $Z =
3.8\times 10^5$ and we obtain $\RA = 6.7350\times 10^5$ and $\RC = 7.7969\times
10^5$. We verify that the IMCF evolution of $S_0$ preserves convexity and
approaches spheres as in the previous cases. A plot of the object surface,
initial surface and short time evolution is shown in Figure \ref{fig_C2}.
\begin{figure}[h]
\begin{center}
\includegraphics[width=6cm]{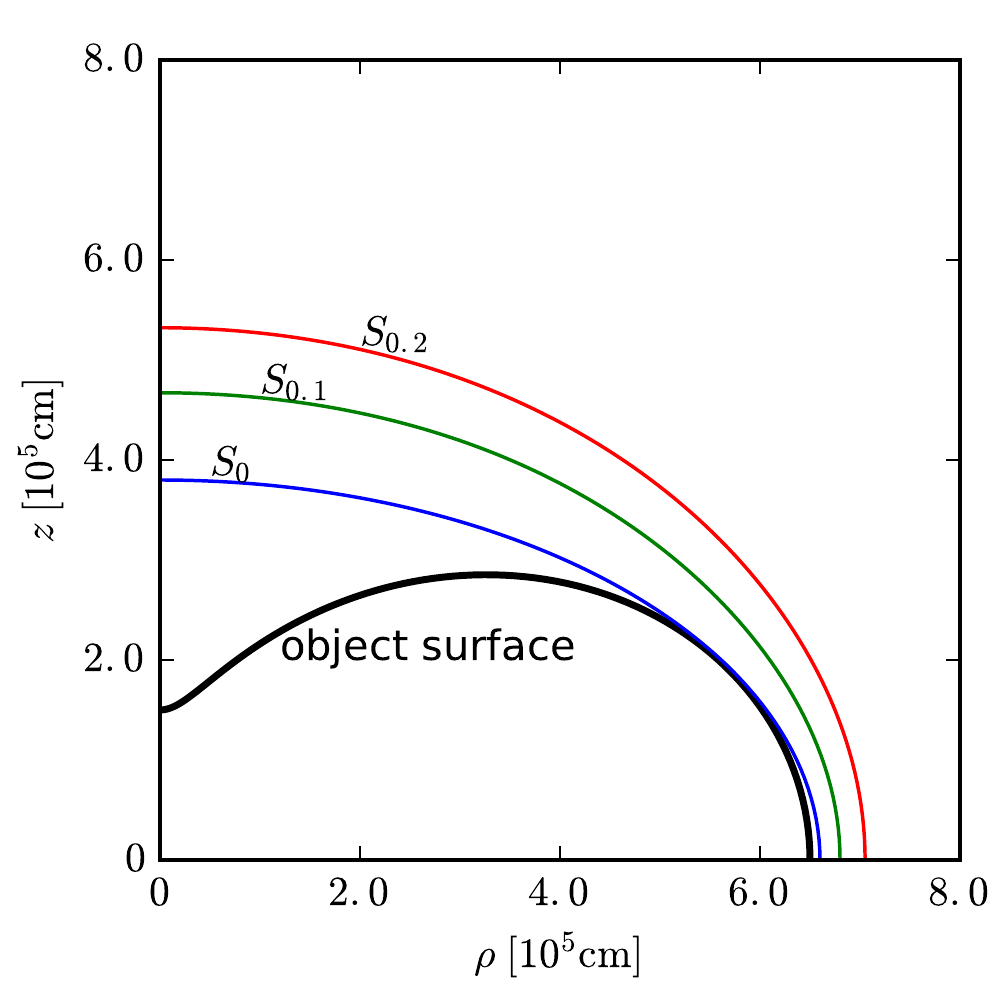}
\caption{Concave object's surface and short time IMCF evolution of the convex spheroidal
surface $S_0$ enclosing it.}\label{fig_C2}
\end{center}
\end{figure}

\subsection{Discussion of numerical results}

Our purpose with the numerical examples computed in this section is twofold:
first, to check the convexity condition of the surfaces $S_t$ along the IMCF
evolution, and second to study the relevance of the
$J^2/(5\RA \RC^2)$ term on the right hand
side in the inequality \eqref{mainineq2} in various cases. We want
to gain insight about when the angular momentum term on the right hand side of the
inequality becomes an important contribution as compared with the ADM mass.

The convexity condition of the surfaces along the IMCF evolution is 
clearly verified in all examples we computed. When the object's surface is convex, one
can evolve the flow starting with $S_0 = \partial\Omega$ (the surface of the
object) and the convexity is preserved by the evolution. When the object's surface
is not convex it is enough to choose, it seems, a spheroidal initial surface
$S_0$ that just encloses the object and the evolution preserves the convexity
again. 

The shape of the objects we consider are simple and our numerical examples far
from exhaustive, however we do not find any sign that the IMCF evolution will
violate the convexity, at least under the smoothness conditions we impose in our
numerical examples. This point certainly deserves a deeper study that we intend
to carry out in future works.

To gain insight on the relative importance of the term proportional to $J^2$ on
the right hand side of \eqref{mainineq2} we define the ratio
\be
\Gamma:= \frac{\Bigl(\frac{J^2}{5\RA \RC^2}\Bigr)}{m_{ADM}}. 
\ee

The values of $\Gamma$ for the high density  objects in Table \ref{table_spheroidal_objects}
are
\[
\Gamma_{NS} \simeq 1.4\times 10^{-2}, \quad \Gamma_{P} \simeq 1.7\times 10^{-2},
\quad \Gamma_{O} \simeq 2.2\times 10^{-2}, \quad \Gamma_{VO} \simeq 1.3\times
10^{-2},
\]
while for the lower density object S we have
\[
\Gamma_{S} \simeq 7.9\times 10^{-8}.
\]
For the high density concave object of Section \ref{sec_532} we get $\Gamma
\simeq 7.6\times 10^{-3}.$

Though we have not computed many examples, our results suggest that the contribution
of the $J^2$-term is higher for higher density objects. By comparing the values
of $\Gamma$ for the objects $P$ and $O$, with interchanged values for the
parameters $R$ and $Z$, we see that, as expected, the value of $\Gamma$ is
higher for the oblate spheroid than for the prolate spheroid.

%%%%%%%%%%%%%%%%%%%%%%%%%%%%%%%%%%%%%%%%%%%%%%%%%%%%%%%%%%%%%%%%%%%%%%%%%%%%%%%%%%%%%%%%%%%%%
\section*{Acknowledgments}
P. Anglada and O. E. Ortiz started studying the relationship between size and
angular momentum for axially symetric objects in collaboration with Sergio Dain
more than two years ago. All the authors feel deeply indebted to Sergio.
M. E. Gabach-Clement wants to thank Marc Mars, Markus Khuri and Greg Galloway
for illuminating discussions. This work was supported by grants from CONICET and
SECyT, UNC.

%%%%%%%%%%%%%%%%%%%%%%%%%%%%%%%%%%%%%%%% APENDICES %%%%%%%%%%%%%%%%%%%%%%%%%%%%%%%%%%%%%%%%%%
\setcounter{section}{0}
\setcounter{subsection}{0}
\renewcommand\thesection{\Alph{section}}
\renewcommand\thesubsection{\thesection.\arabic{subsection}}

\appendix
\section{Conformal method}\label{appendix_A}

In this section we derive the constraint equations in the 
conformally flat case, which is one of the set of equations (together with the flow equations, given in appendix \ref{appendix_C}) 
we use we use in our numerical computations of section \ref{sec:num}.

The conformal method is a well known technique that can be used to simplify the constraints 
\begin{comment}Given an initial data set for the Einstein equations, $(M, \bar{g}, K;
\mu, j)$ 

where $M$ is a connected 3-dimensional manifold, $ \bar g $ a (positive
definite) Riemannian metric, $ K$ a symmetric tensor field on $M$, $\mu$ is the
energy density and $j^\alpha$ is the current density,

satisfying the constraint equations\end{comment}
\begin{align}
 \label{const1}
   \bar D_\beta   K^{\alpha \beta} -  \bar D^\alpha   (\mbox{tr}K)= -8\pi j^\alpha,\\
 \label{const2}
   \bar R -  K_{\alpha \beta}   K^{\alpha \beta}+  (\mbox{tr}K)^2=16\pi \mu,
\end{align}
where ${\bar D}$ and $\bar R$ are the Levi-Civita connection and
the curvature scalar associated with $  \bar g$. 
\begin{comment}
We consider matter fields satisfying the dominant energy condition
\begin{equation}
  \label{eq:1}
  \mu\geq \sqrt{j^\alpha j_\alpha}.
\end{equation}

We assume that the data is asymptotically flat and maximal (tr$K=0$), and for simplicity we assume
take the topology of the manifold $M$ as $\Rt$.

\begin{comment}
The data are called
 if the metric and the second fundamental form decay like
\begin{equation}
  \label{eq:36}
  g_{\alpha \beta}=\delta_{\alpha \beta}+O(r^{-1}), \quad K_{\alpha \beta}=O(r^{-3}),
\end{equation}

We assume that the data are \emph{axially symmetric}, thus there exists  a  Killing vector field  $\eta^\alpha $, i.e;
\begin{equation}
  \label{eq:38}
 \pounds_\eta   g_{\alpha \beta}=0,
\end{equation}
where $\pounds$ denotes the Lie derivative, 
which  has complete periodic orbits and such that
\begin{equation} 
  \label{eq:8b} 
  \pounds_\eta \mu  = \pounds_\eta j^\alpha  = \pounds_\eta   K_{\alpha \beta}=0.
 \end{equation} 
we denote $\eta=\eta^\alpha  \eta_\alpha $.
\\

\eqref{const1}, \eqref{const2} for maximal initial data sets. 
\end{comment}

We restrict to the maximal case (tr$K$=0) and take $\tilde g_{\alpha \beta}$, and  $\tilde K_{\alpha \beta}$ to be symmetric tensor fields such that
\begin{equation}
  \label{eq:7}
   \bar g_{\alpha \beta}=\Psi^4 \tilde g_{\alpha \beta} \qquad K_{\alpha \beta}= \Psi^{-2} \tilde K_{\alpha \beta}
\end{equation}
where $\Psi$ is the positive conformal factor.
\begin{comment}
 have the relations
\begin{equation}
  \label{eq:8}
  \tilde K^{\alpha \beta}= \Psi^{10}  K^{\alpha \beta}, \quad  \tilde K^{\alpha \beta} \tilde K_{\alpha \beta}=\Psi^{12}  K^{\alpha \beta} K_{\alpha \beta},
\end{equation}
and
\begin{equation}
  \label{eq:9}
  \tilde D_\alpha  \tilde K^{\alpha \beta}= \psi^{10} D_\alpha   K^{\alpha \beta},
\end{equation}
\begin{equation}
  \label{eq:10}
  R=\Psi^{-5} \left(\Psi \tilde R -8\Delta_{\tilde h} \Psi \right).
\end{equation}

\end{comment}

In terms of these new fields the constraint equations read
 \begin{align}
 \label{const1c}
   \tilde D_\beta   \tilde K^{\alpha \beta} = -8\pi \tilde j^\alpha ,\\
 \label{const2c}
   L_{\tilde g} \Psi= - \frac{2\pi \tilde \mu}{\Psi^3}-   \frac{\tilde K_{\alpha 
\beta}  \tilde
     K^{\alpha \beta}}{8\Psi^7}, 
\end{align} 
where $\tilde D$ is the covariant derivative with respect to $\tilde g$, and we have defined
\begin{equation}
  \label{eq:37}
  L_{\tilde g} \Psi= \Delta_{\tilde g} \Psi-\frac{1}{8} \Psi \tilde R,    
\end{equation} 
\begin{equation}
  \label{eq:12}
  \tilde \mu =\Psi^8\mu, \quad \tilde j^\alpha = \Psi^{10}j^\alpha 
\end{equation}
where $\tilde R$ is the scalar curvature associated to $\tilde g$.

The momentum constraint (\ref{const1c}) can be solved in the following form. Let $A_\alpha$ be a 1-form to be specified and set
\begin{equation}
  \label{eq:39}
  \tilde K_{\alpha \beta}= (\mathcal{L}A)_{\alpha \beta}+\sigma_{\alpha \beta},
\end{equation}
where
\begin{equation}
  \label{eq:40}
  (\mathcal{L}A)_{\alpha \beta}=\tilde D_\alpha  A_\beta +\tilde D_\beta A_\alpha -\tilde g_{\alpha \beta} \frac{2}{3}\tilde D_\gamma A^\gamma, 
\end{equation}
and $\sigma_{\alpha \beta}$ is an arbitrary trace-free tensor field. Then the momentum constraint
has the form
\begin{equation}
  \label{eq:41}
 \mathbf{L}_{\tilde g} A_\alpha  =-8\pi\tilde j_\alpha  -\tilde D^\beta \sigma_{\alpha \beta} ,
\end{equation}
where $\mathbf L_{\tilde g}$ is the elliptic operator
\begin{equation}
  \label{eq:42}
    \mathbf{L}_{\tilde g} A_\alpha = \tilde D^\beta (\mathcal{L}A)_{\alpha \beta}. 
\end{equation}

It is important to note  that the set of elliptic equations
(\ref{eq:42}) and (\ref{const2c}) can be solved on a bounded domain prescribing
Dirichlet conditions for $\Psi$ and the analog to Neumann condition for $A_\alpha$, namely
\begin{equation}
  \label{eq:43}
 \nu^\alpha (\mathcal{L}A)_{\alpha \beta}=\phi_\beta 
\end{equation}
where $\nu^\alpha$ is the outer unit normal to the boundary and   $\phi_\beta$ is a function on the boundary.

\begin{comment}
In terms of the conformal fields the dominant energy condition $ \mu\geq \sqrt{ j^\alpha  j_\alpha}$ implies
\begin{equation}
  \label{eq:11}
   \tilde \mu\geq \sqrt{\tilde j^\alpha \tilde j_\alpha}.
\end{equation}
\end{comment}

\subsection{Conformally flat initial data}

Consider a conformally flat initial data, that is
\begin{equation}
  \label{eq:13}
  \tilde g_{\alpha \beta}=\delta_{\alpha \beta}
\end{equation}
where $\delta_{\alpha\beta}$ is the flat 3-metric. In this case the constraint equations simplify to
 \begin{align}
 \label{const1cf}
   \partial_\beta   \tilde K^{\alpha \beta} = -8\pi \tilde j^\alpha ,\\
 \label{const2cf}
 \Delta \Psi =- \frac{2\pi \tilde \mu}{\Psi^3}-\frac{ \tilde K_{\alpha \beta} \tilde
   K^{\alpha \beta}}{8\Psi^{7}}. 
\end{align} 
where $\partial_\alpha$ are partial derivatives, $\Delta$ is the flat Laplace operator in 3 dimensions and  indices are moved with 
respect to $\delta$.

The solution of equation \eqref{const1cf} is constructed as in the previous
section. We choose $\sigma_{\alpha \beta}=0$, hence  
\begin{equation}
  \label{eq:19}
 \tilde  K_{\alpha \beta}=\partial_\alpha  A_\beta+\partial_\beta A_\alpha 
 -\frac{2}{3}\delta_{\alpha \beta} \partial_\gamma A^\gamma,
\end{equation}
and equation  \eqref{const1cf} translates into
\begin{equation}
  \label{eq:20}
  \Delta A^\alpha +\frac{1}{3} \partial^\alpha   \partial_\beta A^\beta= -8\pi \tilde j^\alpha .
\end{equation}

Imposing the following condition on the sources
\begin{equation}
  \label{eq:18}
  \partial_\alpha  \tilde j^\alpha =0.
\end{equation} we obtain
\begin{equation}
  \label{eq:21}
 \partial_\alpha A^\alpha =0,
\end{equation}
and hence the final set of equation equivalent to \eqref{const1cf} are
\begin{equation}
  \label{eq:22}
    \Delta A^\alpha = -8\pi \tilde j^\alpha.
\end{equation}
The solution of this equation is given by the Green function
\begin{equation}
  \label{eq:23}
  A^\alpha (x)= 2 \int \frac{\tilde j^\alpha (x')}{|x-x'|} d^3x'. 
\end{equation}
where $|\quad |$ is the flat norm.

We prescribe $\tilde \mu$ to be exactly the border case in the
dominant energy condition, namely
\begin{equation}
  \label{eq:14}
  \tilde \mu=\sqrt{\tilde j^\alpha  \tilde j_\alpha }. 
\end{equation}
and therefore the only free data is $\tilde j_\alpha $. We choose this vector to be
\begin{equation}
  \label{eq:16}
  \tilde j^\alpha  =a \tilde\eta^\alpha  ,
\end{equation}
where $a$ is a smooth function of the coordinates and $\tilde\eta_\alpha $ is the flat Killing vector.
In spherical coordinates $(r,\theta,\varphi)$, the Killing vector is $  \tilde\eta^\alpha =\frac{\partial }{\partial \varphi}$ and 
$a=a(r,\theta)$. Then using (\ref{eq:16})  we find that the solution to (\ref{eq:22}) is given by
\begin{equation}
  \label{eq:45}
  A^\alpha=f(r, \theta)\tilde\eta^\alpha ,
\end{equation}
where $f$ satisfies 
\begin{equation}
  \label{eq:47}
  \partial^2_r f +4\frac{\partial_r f}{r} +
\frac{\partial^2_\theta f}{r^2}+\frac{3\cos\theta \partial_\theta f
}{r^2\sin\theta}=-8\pi a.
\end{equation}
with the boundary condition
\begin{equation}
  \label{eq:48}
\lim_{r\to \infty}  f = 0.
\end{equation}

It is straightforward to calculate $\tilde K_{\alpha \beta}$ from the expression
(\ref{eq:45}), using the Killing equation and the fact that
\begin{equation}
  \label{eq:54}
  \tilde\eta^\alpha  \partial_\alpha  f=0, 
\end{equation}
and we obtain the remarkable simple formulae
\begin{equation}
  \label{eq:64}
  \tilde K_{\alpha \beta}=2 \tilde\eta_{(\alpha }\partial_{\beta)}f \quad \quad 
\tilde K_{\alpha \beta}  \tilde K^{\alpha \beta}=2 |\partial f|^2 
r^2 \sin^2\theta. 
\end{equation}

Summarizing, the systems we consider for the numerical computations  are prescribed by giving an arbitrary, axially symmetric,
function $a$ of compact support. This function describes the location of
the matter sources. Given  $a$, the equations \eqref{eq:47} with the boundary condition \eqref{eq:48} and
\begin{equation}
  \label{eq:57}
  \Delta \Psi =- \frac{2\pi a r \sin \theta}{\Psi^3}-\frac{ |\partial f|^2 r^2 \sin^2 \theta }{4\Psi^{7}}, \quad
  \lim_{r\to \infty}  \Psi = 1.   
\end{equation}

Using the sub and super solution method, it can be proven that given a smooth $a$ there exists a unique solution for the non-linear elliptic 
equation \eqref{eq:57}.\\

\section{Surfaces on conformally flat data}\label{appendix_B}

All relevant physical parameters of the initial data and of surfaces on $M$ can be computed in terms of $a$ and $\Psi$. Let $\Omega$ be any domain that contains 
the matter fields, and let $\dvf$, $ds_0$ be the flat volume and surface element 
respectively.

The ADM mass is given by
 \begin{equation}
   \label{m_adm}
   m_{ADM} =\int_{\Rt} \left( \frac{|\partial \Psi|^2}{2\pi\Psi^2} 
 +   \frac{a r \sin \theta}{\Psi^4}  +  \frac{2 
|\partial f|^2 r^2 \sin^2 \theta}{16\pi\Psi^8} \right)\dvf  
 \end{equation}
The angular momentum is
\begin{align}
  \label{eq:56}
  J &=-\int_\Omega a r^2 \sin^2 \theta \dvf.
\end{align}
The baryonic mass (as a measure of the quasi-local mass) is given by:
\begin{equation}
 M_b =  \int_{\Omega}  \mu  \dv =  \int_{\Omega} \frac{a r \sin \theta}{\Psi^2}  \dvf .
\end{equation}
Concerning the measures of the size of $\Omega$ we calculate the surface area $A(\partial \Omega)$ 
\begin{equation}
  \label{eq:58}
  A(\partial\Omega)=\int_{\partial \Omega} \Psi^4 ds_0
\end{equation}
and the length of the greatest axisymmetric circle $\mathcal C(\partial \Omega)$ 
\begin{equation}
\mathcal C= 2\pi \max_{\partial \Omega} 
\left( \sqrt{\eta} \right)=2\pi \max_{\partial \Omega} \left( r \sin\theta
\Psi^2 \right).
\end{equation}

\section{The IMCF on conformally flat initial data}\label{appendix_C}

In this section we derive the evolution equation for the flow surfaces, used in our numerical examples of section \ref{sec:num}.

We study the IMCF in $M=\mathbb R^3$ with metric $\bar g_{\alpha\beta}=\psi^4 \delta_{\alpha \beta}$. It is convenient to 
introduce a level-set formulation of \eqref{eqIMCF}, where the evolving surfaces are given as level-sets of a scalar function $u$ via $$ S_t= \{x^\alpha : u(x^\alpha,t)=0\}$$
and \eqref{eqIMCF} is replaced by the degenerate elliptic equation:

\begin{equation}
 \label{imcfu}
 \bar D_\alpha \left( \frac{\bar g^{\alpha \beta} \pd_\beta u }{ |\pd u|_{\bar g}} \right) = |\pd u|_{\bar g}
\end{equation}

Note that in our model we can write the mean curvature $H$ of $S_t$ in terms of $\Psi$:

\begin{equation}
\label{HN}
 H=\bar D_\alpha\left( \frac{\bar g^{\alpha \beta} \pd_\beta u }{ |\pd u|_{\bar g}} \right)=\frac{1}{\Psi^2 |\partial u|}\left( \Delta u + 4\frac{\partial\Psi \cdot \partial u}{\Psi} - \frac{ \partial |\partial u|  \cdot \partial u  }{|\partial u|} \right)
\end{equation}
where all the derivatives and dot products are computed  with respect to the flat metric.

At this point it is convenient to define an operator $\mathcal P$ acting on $u$:
\begin{equation}
\label{defLu}
 \mathcal P u=\Delta u + 4\frac{\partial \Psi \cdot \partial u}{\Psi} - \frac{ \partial |\partial u|  \cdot \partial u  }{|\partial u|}
\end{equation}
thus, 
equation \eqref{imcfu} can be written as
\begin{equation}
\label{hl}
 |\partial u|=\frac{\mathcal P u}{\Psi^2 H}.
\end{equation}

Now,  instead of solving this elliptic equation,
we will solve an evolution equation for $u$ obtained by taking a total time
derivative of $u$ on the surface $S_t$:
\begin{align*}
 0&=\frac{du}{dt}= \frac{\pd u}{\pd x^\alpha} \frac{\pd x^\alpha}{\pd t } + \frac{\pd u}{\pd t}= \partial_\alpha u \frac{\nu^\alpha}{H} + \frac{\pd u}{\pd t}= \partial_{\alpha} u \frac{\partial^\alpha u}{\Psi^2 |\partial u|} \frac{1}{H}  + \frac{\pd u}{\pd t} \\
  &= \frac{|\partial u|}{\Psi^2 H} +\frac{\pd u}{\pd t}= \frac{|\partial u|^2}{\mathcal P u} +\frac{\pd u}{\pd t}
\end{align*}
where in the second step we have used the IMCF equation, and in the last one we use that $H$ is given by \eqref{hl}, thus we have:
\begin{equation}
\label{imcfl}
 \frac{\pd u}{\pd t}=-\frac{|\pd u|^2}{\mathcal P u}.
\end{equation}
We take the surface  $S_t$ to be given by
\begin{equation}
u= u(r, \theta, t) =r - v(\theta,t)=0
\end{equation}
and putting this into \eqref{imcfl}, we arrive at  the following evolution equation for $v$
\begin{equation}\label{IMCF_equation}
\frac{\pd v}{\pd t}=\frac{1+\left(\frac{\pd_\theta v}{v}\right)^2}{\tilde {\mathcal P}v}
\end{equation}
where we define the operator $\tilde{\mathcal P}$ acting on $v$ as
\begin{equation}\label{Lu_operator}
 \tilde{\mathcal P} v=\left(
 \frac{2}{v}-\frac{\pd_\theta (\sin\theta \pd_\theta v)}{v^2 \sin\theta}
 +\frac{4}{\Psi}\left(\pd_r \Psi \rvert_{r=v} - \frac{\pd_\theta v \pd_\theta \Psi}{v^2} \right)
 +\frac{\left(\frac{\pd_\theta v}{v^2}\right)^2 \left(v + \pd^{2}_\theta v \right)  }{1+\left(\frac{\pd_\theta v}{v}\right)^2} 
 \right).
\end{equation}

Finally, the principal curvatures of $S_t$ are 
\begin{equation}\label{lambda_phi}
\lambda_{\varphi}=\frac{2r^2 \pd_r \Psi \sin\theta+\Psi 
r\sin\theta-2\pd_\theta \Psi \sin\theta u'-\Psi\cos\theta u'}{\Psi^3 
r\sin\theta\sqrt{r^2+u'^2}}
\end{equation}
\begin{equation}\label{lambda_theta}
\lambda_{\theta}=\frac{r\left[
(r^2+u'^2)(r^2 \pd_r\Psi+r-u''\Psi)+u''u'^2\Psi\right]}{\Psi^3(r^2+u'^2)^{5/2}}
\end{equation}
and $S_t$ is convex if $\lambda_\varphi,\lambda_\theta\geq0$.
Note that in the limit of $\Psi\to1$ and $u=const.$ we obtain the principal 
curvatures of the round sphere.
\be
\lambda^{sphere}_\theta=\lambda^{sphere}_\varphi=\frac{1}{r}
\ee

\end{document}